\newcommand{\comment}[1]{}
\newcommand{\ltsqn}[1]{\|#1\|^2}
\newcommand{\ltn}[1]{\|#1\|}
\newcommand{\bB}{{\bf B}}
\newcommand{\bS}{{\bf S}}
\newcommand{\bD}{{\bf D}}
\newcommand{\bp}{{\bf p}}
\newcommand{\bb}{{\bf b}}
\newcommand{\bt}{{\bf t}}
\newcommand{\bz}{{\bf z}}
\newcommand{\bv}{{\bf v}}
\newcommand{\bu}{{\bf u}}
\newcommand{\bx}{{\bf x}}
\newcommand{\R}{{\bf\mathbb{R}}}
\newcommand{\Z}{{\bf\mathbb{Z}}}
\newcommand{\btd}{{\bf\texttt{d}}}
\newcommand{\bL}{{\bf\mathbb{L}}}
\newcommand{\Q}{{\bf\mathbb{Q}}}
\newcommand{\tbL}{\tilde{\bL}}
\newcommand{\tgsvp}{{\tt GapSVP}}
\newcommand{\tgcvp}{{\tt GapCVP}}
\newcommand{\tcvp}{{\tt CVP}}
\newcommand{\tsat}{{\tt SAT}}
\newcommand{\tsvp}{{\tt SVP}}
\newcommand{\tusvp}{{\tt uSVP}}
\newcommand{\tdusvp}{{\tt duSVP}}
\newcommand{\tcoam}{\text{co-}\textbf{AM}}
\newcommand{\tconp}{\text{co-}\textbf{NP}}
\newcommand{\tnp}{\textbf{NP}}
\newcommand{\tpoly}{{\tt poly}}
\title{Improved hardness results for unique shortest vector problem}
\author{Divesh Aggarwal \and Chandan Dubey}
\institute{Institute for Theoretical Computer Science\\ETH Zurich\\\email{divesha@inf.ethz.ch}\\\email{chandan.dubey@inf.ethz.ch}}
\author{Divesh Aggarwal \and Chandan Dubey\thanks{Chandan Dubey is partially supported by the Swiss National Science Foundation (SNF), project no. 200021-132508}}
\institute{Institute for Theoretical Computer Science\\ETH Zurich\\\email{divesha@inf.ethz.ch}\\\email{chandan.dubey@inf.ethz.ch}}
\begin{document}

\maketitle

\begin{abstract}
We give several improvements on the known hardness of the unique shortest vector problem.
\begin{itemize}
\item We give a deterministic reduction from the shortest vector problem to the unique shortest vector problem. As a byproduct, we get deterministic NP-hardness for unique shortest vector problem in the $\ell_\infty$ norm. 
\item We give a randomized reduction from $\tsat$ to $\tusvp_{1+1/\tpoly(n)}$. This shows that $\tusvp_{1+1/\tpoly(n)}$ is NP-hard under randomized reductions.
\item We show that if $\tgsvp_{\gamma} \in \tconp$ (or $\tcoam$) then $\tusvp_{\sqrt{\gamma}} \in \tconp$ ($\tcoam$ respectively). This simplifies previously known $\tusvp_{n^{1/4}} \in \tcoam$ proof by Cai \cite{Cai98} to $\tusvp_{(n/\log n)^{1/4}} \in \tcoam$, and additionally generalizes it to $\tusvp_{n^{1/4}} \in \tconp$.
\item We give a deterministic reduction from search-$\tusvp_{\gamma}$ to the decision-$\tusvp_{\gamma/2}$. We also show that the decision-$\tusvp$ is {\bf NP}-hard for randomized reductions, which does not follow from Kumar-Sivakumar \cite{KS01}.
\end{itemize}
\end{abstract}
\newpage

\section{Introduction}

A {\em lattice} is the set of all integer combinations of~$n$ linearly independent vectors $\bb_1, \bb_2, \dots, \bb_n$ in $\R^m$. These vectors are referred to as a {\em basis} of the lattice and $n$ is the {\em rank} of the lattice. The {\em successive minima} $\lambda_i(\bL)$ (where $i = 1, \dots, n$) for the lattice $\bL$ are among the most fundamental parameters associated to a lattice. The $\lambda_i(\bL)$ is defined as the smallest value such that a sphere of radius $\lambda_i(\bL)$ centered around the origin contains at least $i$ linearly independent lattice vectors. Lattices have been investigated by computer scientists for a few decades after the discovery of the LLL algorithm \cite{LLL82}. More recently, Ajtai \cite{Ajt96} showed that lattice problems have a very desirable property for cryptography i.e., they exhibit a worst-case to average-case reduction. This property immediately yields one-way functions and collision resistant hash functions, based on the {\em worst case} hardness of lattice problems. This is in a stark contrast to the traditional number theoretic constructions which are based on the average-case hardness e.g., factoring, discrete logarithms. 

We now describe some of the most fundamental and widely studied lattice problems. Given a lattice $\bL$, the $\gamma$-approximate shortest vector problem ($\tsvp_\gamma$) is the problem of finding a non-zero lattice vector of length at most $\gamma\lambda_1(\bL)$. Let the minimum distance of a point $\bt \in \R^m$ from a vector of the lattice $\bL$ be denoted by $\btd(\bt,\bL)$. Given a lattice $\bL$ and a point $\bt\in \R^m$, the $\gamma$-approximate closest vector problem or $\tcvp_\gamma$, is the problem of finding a $\bv \in \bL$ such that $\ltn{\bv-\bt} \leq \gamma\btd(\bt,\bL)$. 

Besides the search version just described, $\tcvp$ and $\tsvp$ also have a decision version. The problem $\tgcvp_\gamma$ is the problem of deciding if, given $(\bB,\bt, d \in \R)$, $\btd(\bt,\bL(\bB))\leq d$ or $\btd(\bt,\bL(\bB)) > \gamma d$. Similarly, the problem $\tgsvp_\gamma$  is the problem of deciding if, given $(\bB,d\in \R)$, $\lambda_1(\bL(\bB)) \leq d$ or $\lambda_1(\bL(\bB)) > \gamma d$.

The two problems $\tcvp$ and $\tsvp$ are quite well studied. We know that  they can be solved exactly in deterministic $2^{O(n)}$ time \cite{MV10,AKS01}. They can be approximated within a factor of $2^{n(\log\log n)^2/\log n}$, in polynomial time, using LLL \cite{LLL82} and subsequent improvements by Schnorr \cite{Sch87} (for details, see the book by Micciancio and Goldwasser \cite{GM02}). On the other hand, it is known that there exists $c>0$, such that no polynomial time algorithm can approximate these problems within a factor of $n^{c/\log\log n}$, unless {\bf P} $=$ {\bf NP} or another unlikely scenario is true \cite{DKRS03,HR07,BS99}. It is also known that both these problems cannot be NP-hard for a factor of $\sqrt{n/\log n}$ or the polynomial hierarchy will collapse.

A variant of $\tsvp$ that has been especially relevant in cryptography is the unique shortest vector problem ($\tusvp$). The problem $\tusvp_{\gamma}$ is the problem of finding the shortest non-zero vector of the lattice, given the promise that $\lambda_2(\bL) \geq \gamma\lambda_1(\bL)$. The security of the first public key cryptosystem by Ajtai-Dwork \cite{AD97} was based on the worst-case hardness of $\tusvp_{O(n^8)}$. In a series of papers \cite{GGH97,Reg04}, the uniqueness factor was reduced to $O(n^{1.5})$. 

In contrast to $\tcvp$ and $\tsvp$, much less is known about the hardness of $\tusvp$. The current {\bf NP}-hardness result known for $\tusvp_{\gamma}$ is for $\gamma < {1+2^{-n^c}}$, which is shown by a randomized reduction from $\tsvp$ \cite{KS01}. In \cite{LM09}, it was shown that there is a reduction from $\tusvp_\gamma$ to $\tgsvp_\gamma$ and also a reduction from $\tgsvp_{\gamma}$ to $\tusvp_{\frac{\gamma}{2\sqrt{n/\log n}}}$. From the first reduction, we can conclude that $\tusvp_{\gamma} \in \tconp$ if $\tgsvp_{\gamma} \in \tconp$ which, using the result of \cite{AR05} implies that $\tusvp_{\sqrt{n}} \in \tconp$. It is already know from Cai \cite{Cai98} that $\tusvp_{n^{1/4}} \in \tcoam$. A discussion of the proofs and the simplification can be found in Section~5.

{\bf Contributions of this paper.} In Section 3.1, we give a deterministic polynomial time reduction from $\tsvp$ to $\tusvp$ achieving similar bounds as \cite{KS01} for the $\ell_2$ norm. This implies, unlike \cite{KS01}, that deterministic {\bf NP}-hardness of $\tsvp$ implies deterministic {\bf NP}-hardness of $\tusvp$. Also, this result shows that the decision problem $\tdusvp$ is also {\bf NP}-hard under randomized reductions. In Section 3.2, we show that a similar idea gets us NP-hardness proof for $\tusvp$ in $\ell_\infty$ norm. In Section 4, we show that $\tusvp_{1+1/{\texttt poly}(n)}$ is hard by giving a randomized reduction of the $\tsvp$ instance created by Khot \cite{Khot05} to $\tusvp_{1+1/{\texttt poly}(n)}$. In Section 5, we show  $\tusvp_{c(n)^{1/4}} \in \tconp$ for some $c>0$, which implies that $\tusvp_{\gamma}$ cannot be {\bf NP}-hard for $\gamma \geq c n^{1/4}$ unless ${\bf NP}=\tconp$. In Section 6, we give a search to decision reduction for the unique shortest vector problem, i.e., a reduction from $\tusvp_\gamma$ to $\tdusvp_{\gamma/2}$. The definition of $\tdusvp$ is implicit in Cai \cite{Cai98}. A comparison of some of our results with previously known results has been depicted in Figures \ref{fig:now} and \ref{fig:prev}.

%

\begin{figure}[htb]
\begin{picture}(100,20)(-50,-10)
\put(0,0){\line(1,0){300}}
\put(270,-5){\line(0,1){10}}
\put(265,-10){\texttt{crypto}}
\put(265,10){$n^{1.5}$}
\put(220,-5){\line(0,1){10}}
\put(218,10){\texttt{?}}
\put(215,-10){$\tconp$}
\put(190,-5){\line(0,1){10}}
\put(180,-10){$\tcoam$}
\put(180,10){$n^{\frac{1}{4}}$}
\put(32,-5){\line(0,1){10}}
\put(17,-12){\texttt{NP-hard}}
\put(7,10){$(1+\frac{1}{\texttt{exp}})$}
\end{picture}
\caption{Before this paper}
\label{fig:now}
\end{figure}

\begin{figure}[htb]
\begin{picture}(100,20)(-50,0)
\put(0,0){\line(1,0){300}}
\put(270,-5){\line(0,1){10}}
\put(265,-10){\texttt{crypto}}
\put(265,10){$n^{1.5}$}
\put(190,-5){\line(0,1){10}}
\put(184,10){$n^{\frac{1}{4}}$}
\put(180,-10){$\tconp$}
\put(160,-5){\line(0,1){10}}
\put(145,-10){$\tcoam$}
\put(143,10){$(\frac{n}{\log n})^{\frac{1}{4}}$}
\put(40,-5){\line(0,1){10}}
\put(30,-12){\texttt{NP-hard}}
\put(30,10){$(1+\frac{1}{\texttt{poly}})$}
\end{picture}
\caption{After this paper}
\label{fig:prev}
\end{figure}

\section{Preliminaries}

\subsection{Notation}
A lattice basis is a set of linearly independent vectors $\bb_1, \dots, \bb_n \in \R^m$.  It is sometimes convenient to think of the basis as an $m \times n$ matrix $\bB$, whose $n$ columns are the vectors $\bb_1, \dots, \bb_n$. The lattice generated by the basis $\bB$ will be written as $\bL(\bB)$ and is defined as $\bL(\bB) = \{\bB \bx | \bx \in \Z^n\}$. A vector $\bv \in \bL$ is called a primitive vector of the lattice $\bL$ if it is not an integer multiple of another lattice vector except $\pm \bv$. We will assume that the lattice is over rationals, i.e., $\bb_1, \dots, \bb_n \in \Q^m$, and the entries are represented by the pair of numerator and denominator. 

A {\em shortest vector} of a lattice is a non-zero vector in the lattice whose $\ell_2$ norm is minimal.   The length of the shortest vector is $\lambda_1(\bL(\bB))$, where $\lambda_1$ is as defined in the introduction.  For a vector $\bt \in \R^m$, let $\btd(\bt,\bL(\bB))$ denote the distance of $\bt$ to the closest lattice point in $\bL(\bB)$. 

For any lattice $\bL$, and any vector $\bv \in \bL$, we denote by $\bL_{\bot \bv}$ the lattice obtained by projecting $\bL$ to the space orthogonal to $\bv$. 

For an integer $k \in \Z^{+}$ we use $[k]$ to denote the set $\{1, \ldots, k\}$. 

\subsection{Lattice Problems}

In this paper we are concerned with the shortest vector problem and the unique shortest vector problem.  The search and decision versions  of the shortest vector problem are defined below. 

\begin{description}
\item[$\tgsvp_{\gamma}$:] Given a lattice basis $\bB$ and an integer $d$, say ``YES'' if $\lambda_1(\bL(\bB))\leq d$ and  ``NO'' if  $\lambda_1(\bL(\bB)) > \gamma d$.


\item[$\tsvp_{\gamma}$:] Given a lattice basis $\bB$, find a non-zero vector $\bv \in \bL(\bB)$ such that $\ltn{\bv} \leq \gamma\lambda_1(\bL(\bB))$.
%
\end{description}

\noindent
We now formally define the search and decision unique shortest vector problem. The definition of the decision version of $\tusvp$ is implicit in Cai \cite{Cai98}, although, to our knowledge, it has not been explicitly defined anywhere in the literature.  
\begin{description}
\item[$\tusvp_{\gamma}$:] Given a lattice basis $\bB$ such that $\lambda_2(\bL(\bB)) \geq \gamma\lambda_1(\bL(\bB))$, find a vector $\bv \in \bL(\bB)$ such that $\ltn{\bv} = \lambda_1(\bL(\bB))$.

\item[$\tdusvp_{\gamma}$:] Given a lattice basis $\bB$ and an integer $d$, such that $\lambda_2(\bL(\bB)) \geq \gamma\lambda_1(\bL(\bB))$, say ``YES'' if $\lambda_1(\bL(\bB))\leq d$ and  ``NO'' if $\lambda_1(\bL(\bB)) > d$.
\end{description}

\subsection{Defining \tcoam~ and \tconp~}

The definitions of this section have been adapted from \cite{GG98}.

\begin{definition}\em
A promise problem $\Pi = (\Pi_{\text{YES}}, \Pi_{\text{NO}})$ is said to be in \tconp~ if there exists a polynomial-time recognizable (witness) verification predicate $V$ such that 
\begin{itemize}
\item For every $x \in \Pi_{\text{NO}}$, there exists $w \in \{0,1\}^*$ such that $V(x, w) = 1$.
\item For every $x \in \Pi_{\text{YES}}$ and every $w \in \{0,1\}^*$, $V(x,w) = 0$.
\end{itemize}
\end{definition}

\begin{definition}\em
A promise problem $\Pi = (\Pi_{\text{YES}}, \Pi_{\text{NO}})$ is said to be in \tcoam~ if there exists a polynomial-time recognizable verification predicate $V$ and polynomials $p, q$ such that for every $x \in \Pi_{\text{YES}} \cup \Pi_{\text{NO}}$ with $|x| = n$, and $y$ chosen uniformly at random from $\{0,1\}^{p(n)}$,
\begin{itemize}
\item If $x \in \Pi_{\text{NO}}$, then there exists $w \in \{0,1\}^{q(n)}$, such that $\Pr \left(V(x, y, w) = 1 \right) \geq \frac{2}{3}$.
\item If $x \in \Pi_{\text{YES}}$, then for all $w \in \{0,1\}^{q(n)}$, $ \Pr \left( V(x, y, w) = 1 \right) \leq \frac{1}{3}$.
\end{itemize}
\end{definition}

\section{A deterministic polynomial time reduction from \tsvp~ to \tusvp~}

Let us suppose that $\bB=[\bb_1 ~ \bb_2 ~\dots ~ \bb_n]$ is the input lattice. 
The Gram Schmidt orthogonalization of $\bB$, denoted as $\{ \tilde{\bb}_1, \ldots, \tilde{\bb}_n\}$, is defined as
$$\tilde{\bb}_i = \bb_i - \displaystyle\sum\limits_{j  = 1}^{i-1}\mu_{i,j}\tilde{\bb}_j, \text{ where } \mu_{i,j} = \frac{\langle \bb_i, \tilde{\bb}_j \rangle}{\langle \tilde{\bb}_j, \tilde{\bb}_j \rangle} \; .$$

\begin{definition}\em
\label{def:LLL}
A basis $\bB = \{\bb_1, \ldots, \bb_n\}$ is a $\delta$-{\em LLL reduced basis} \cite{LLL82} if the following holds:
\begin{itemize}
\item $\forall \: 1 \leq j < i \leq n, \: |\mu_{i,j}| \leq \frac{1}{2}$,
\item $\forall \: 1 \leq i < n, \: \delta {\|\tilde{\bb}_i \|}^2 \leq \| \mu_{i+1, i}\tilde{\bb}_i + \tilde{\bb}_{i+1} \|^2$.
\end{itemize}
\end{definition}
We choose $\delta = \frac{3}{4}$ and then, from the above definition, for a $\delta$-LLL reduced basis,  $\forall \: 1 \leq i < n, \: \|\tilde{\bb}_{i}\| \leq \sqrt 2 \| \tilde{\bb}_{i+1} \|$. This implies that $$\|\tilde{\bb}_1\| \leq 2^{(i-1)/2} \| \tilde{\bb}_i\|\; .$$ Since there is an efficient algorithm \cite{LLL82} to compute an LLL-reduced basis, we assume, unless otherwise stated, that the given basis is always LLL-reduced and hence satisfies the above mentioned properties.

\begin{lemma}\label{lll}
For an LLL reduced basis $\bB$, if $\bu=\displaystyle\sum\limits_i \alpha_i \bb_i$ is a shortest vector, then $|\alpha_i| < 2^{3n/2}$ for all $i \in [n]$.
\end{lemma}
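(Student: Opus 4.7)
The plan is to pass through the Gram--Schmidt basis in order to exploit orthogonality, and then to invert the change of coordinates carefully using the size-reduction bound $|\mu_{i,j}| \leq 1/2$. Concretely, I would write
\[
\bu = \sum_{i=1}^n \alpha_i \bb_i = \sum_{i=1}^n \beta_i \tilde{\bb}_i ,
\]
and match coefficients using $\bb_i = \tilde{\bb}_i + \sum_{j<i} \mu_{i,j} \tilde{\bb}_j$. This yields the triangular relation $\beta_i = \alpha_i + \sum_{j>i} \alpha_j \mu_{j,i}$, with $\beta_n = \alpha_n$.

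Next I would bound the Gram--Schmidt coefficients $\beta_i$. Since the $\tilde{\bb}_i$ are pairwise orthogonal, $\ltsqn{\bu} = \sum_i \beta_i^2 \ltsqn{\tilde{\bb}_i}$, so $|\beta_i| \leq \ltn{\bu}/\ltn{\tilde{\bb}_i}$ for every $i$. Because $\bu$ is a shortest vector and $\bb_1 = \tilde{\bb}_1$ lies in the lattice, $\ltn{\bu} = \lambda_1(\bL(\bB)) \leq \ltn{\bb_1} = \ltn{\tilde{\bb}_1}$. Combined with the LLL consequence $\ltn{\tilde{\bb}_1} \leq 2^{(i-1)/2}\ltn{\tilde{\bb}_i}$ already stated in the excerpt, this gives $|\beta_i| \leq 2^{(i-1)/2} \leq 2^{(n-1)/2}$.

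Finally I would invert the triangular relation by reverse induction on $i$, which is the one step that is not immediate. Starting from $|\alpha_n| = |\beta_n| \leq 2^{(n-1)/2}$, the recurrence $\alpha_i = \beta_i - \sum_{j>i}\alpha_j\mu_{j,i}$ together with $|\mu_{j,i}|\leq 1/2$ gives
\[
|\alpha_i| \leq 2^{(n-1)/2} + \tfrac{1}{2}\sum_{j=i+1}^{n} |\alpha_j| .
\]
An easy induction shows $|\alpha_i| \leq 2^{(n-1)/2}(3/2)^{n-i}$, and since $(3/2)^n < 2^n$ this is strictly less than $2^{(n-1)/2}\cdot 2^n = 2^{(3n-1)/2} < 2^{3n/2}$, as claimed.

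The only part that requires any care is the reverse-induction step, because naively inverting a triangular system with off-diagonal entries of magnitude up to $1/2$ could blow the bound up by a factor of $2^n$; the computation above shows that the geometric accumulation in fact contributes only a factor of $(3/2)^n$, which is comfortably absorbed into the $2^{3n/2}$ bound. Everything else is direct from the definition of LLL reduction and the orthogonality of the Gram--Schmidt basis.
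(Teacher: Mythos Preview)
Your proposal is correct and follows essentially the same route as the paper: bound the Gram--Schmidt coefficients $\beta_i=\alpha_i+\sum_{j>i}\mu_{j,i}\alpha_j$ via $\ltn{\bu}\leq\ltn{\tilde{\bb}_1}\leq 2^{(i-1)/2}\ltn{\tilde{\bb}_i}$, then invert the triangular system by reverse induction using $|\mu_{j,i}|\leq 1/2$. The only cosmetic difference is that the paper keeps the index-dependent bound $|\beta_i|\leq 2^{(i-1)/2}$ and proves $|\alpha_{n-k}|\leq 2^{n/2+k}$, whereas you relax to the uniform bound $|\beta_i|\leq 2^{(n-1)/2}$ and obtain $|\alpha_i|\leq 2^{(n-1)/2}(3/2)^{n-i}$; both comfortably land below $2^{3n/2}$.
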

\begin{proof}
We show by induction that for $0 \leq i \leq n-1$, $|\alpha_{n-i}| \leq 2^{n/2+i}$. Since $\bu$ is the shortest vector of $\bL(\bB)$, $\| \bu \| \leq \| \bb_1\|$. Also, since the projection of $\bu$ in the direction of $\tilde{\bb}_n$ is $\alpha_n \tilde{\bb}_n$,
\begin{eqnarray*}
\| \tilde{\bb}_1 \| \geq \|\bu \| &\geq& |\alpha_{n}| \| \tilde{\bb}_n \| \\
         			  &\geq& 2^{-(n-1)/2} |\alpha_{n}| \| \tilde{\bb}_1 \| \; .
\end{eqnarray*}
This implies that $|\alpha_n| \leq 2^{(n-1)/2}$. 

Now assume that $|\alpha_{n-i}| \leq 2^{n/2+i}$ for $0 \leq i < k$. Then, using the fact that $\| \bu \| \leq \| \bb_1\|$ and that the projection of $\bu$ in the direction of $\tilde{\bb}_{n-k}$ is $\left( \alpha_{n-k} + (\displaystyle\sum\limits_{j = n-k+1}^n \mu_{j,n-k} \alpha_j) \right) \tilde{\bb}_{n-k}$, we get that 
\begin{eqnarray*}
\| \tilde{\bb}_1 \| \geq \|\bu \| &\geq& |\left( \alpha_{n-k} + (\displaystyle\sum\limits_{j = n-k+1}^n \mu_{j,n-k} \alpha_j) \right)| \|\tilde{\bb}_{n-k} \| \\
         			  &\geq& 2^{-(n-k-1)/2} |\left( \alpha_{n-k} + (\displaystyle\sum\limits_{j = n-k+1}^n \mu_{j,n-k} \alpha_j) \right)| \| \tilde{\bb}_1 \| \; .
\end{eqnarray*}
Therefore,
\begin{eqnarray*}
|\alpha_{n-k}| &\leq& 2^{(n-k-1)/2} + \displaystyle\sum\limits_{j = n-k+1}^n |\mu_{j,n-k} \alpha_j| \\
           &\leq& 2^{(n - k-1)/2} + \displaystyle\sum\limits_{j = 0}^{k-1} \frac{1}{2} |\alpha_{n-j}| \\
	   &\leq& 2^{(n - k-1)/2} + \frac{1}{2} \displaystyle\sum\limits_{j = 0}^{k-1}  2^{n/2 + j} \\
           &\leq& 2^{(n - k-1)/2} + \frac{1}{2}   2^{n/2 + k} \leq 2^{n/2 + k} \; .
\end{eqnarray*}
\qed \end{proof}

\subsection{Deterministic reduction from $\tsvp$ to $\tusvp$}

Given an instance of $\tsvp(\bB,d)$, we define a new lattice $\bL(\bB^{'})$ as follows.

\begin{equation*}
\left(
\begin{array}{cccc}
\bb_1 & \bb_2 & \dots & \bb_n\\ 
\frac{1}{2^{2n^2}} & 0 & \dots & 0\\
0 & \frac{2^{2n}}{2^{2n^2}} & \dots & 0\\ 
\vdots & \vdots & \vdots & \vdots \\
0 & 0 & \dots & \frac{2^{2n^2 -2n}}{2^{2n^2}} \\ 
\end{array}
\right)
\end{equation*}

So, $(\bb_i^{'})^{T}=[\bb_i^{T}~0~\dots~0~\dots~0~\frac{2^{2(i-1)n}}{2^{2n^2}}~0~\dots~0]$, where the $(m+i)$'th entry is non-zero. For a vector $\bv = \sum_i^n \alpha_i \bb_i \in \bL(\bB)$, we call $\bv' = \sum_i^n \alpha_i \bb_i'$ as the corresponding vector.

\begin{lemma}\label{lem:1}
For the new basis $\bB^{'}$, $\lambda_1^2(\bL(\bB)) \leq \lambda_1^2(\bL(\bB^{'})) \leq \lambda_1^2(\bL(\bB))+2^{-n/2}$.
\end{lemma}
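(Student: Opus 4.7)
The plan is to exploit the block structure of $\bB'$. Writing $\bv = \sum_i \alpha_i \bb_i \in \bL(\bB)$ and $\bv' = \sum_i \alpha_i \bb_i'$, one sees directly from the definition of $\bB'$ that
\[
\|\bv'\|^2 \;=\; \|\bv\|^2 \;+\; \sum_{i=1}^n \alpha_i^2 \cdot 2^{\,4(i-1)n - 4n^2},
\]
since the extra coordinates are disjointly supported (only the $(m+i)$-th extra coordinate of $\bb_i'$ is non-zero). Both bounds will follow from this identity.

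For the lower bound, note that since $\bb_1, \ldots, \bb_n$ are linearly independent, $\bv' = 0$ iff all $\alpha_i = 0$ iff $\bv = 0$. Hence any non-zero $\bv' \in \bL(\bB')$ corresponds to a non-zero $\bv \in \bL(\bB)$ with $\|\bv\| \geq \lambda_1(\bL(\bB))$, so $\|\bv'\|^2 \geq \|\bv\|^2 \geq \lambda_1^2(\bL(\bB))$, giving $\lambda_1^2(\bL(\bB')) \geq \lambda_1^2(\bL(\bB))$.

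For the upper bound, I would pick $\bu = \sum \alpha_i \bb_i$ to be a shortest vector of $\bL(\bB)$ and consider the corresponding $\bu' \in \bL(\bB')$. By Lemma \ref{lll} applied to the LLL-reduced basis $\bB$, we have $|\alpha_i| < 2^{3n/2}$ for all $i$, so $\alpha_i^2 < 2^{3n}$. Since the exponent $4(i-1)n - 4n^2$ is maximized at $i = n$, where it equals $-4n$, each term in the sum is at most $2^{3n} \cdot 2^{-4n} = 2^{-n}$, and the total extra contribution is at most $n \cdot 2^{-n} \leq 2^{-n/2}$ (for $n$ at least some small constant, which can be assumed without loss of generality). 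Therefore $\lambda_1^2(\bL(\bB')) \leq \|\bu'\|^2 \leq \|\bu\|^2 + 2^{-n/2} = \lambda_1^2(\bL(\bB)) + 2^{-n/2}$.

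The only non-trivial step is the upper bound, and the whole point of the peculiar geometric-like progression $2^{2(i-1)n}/2^{2n^2}$ in the construction of $\bB'$ is precisely to ensure that, even in the worst case where the coefficients $\alpha_i$ saturate the $2^{3n/2}$ bound from Lemma \ref{lll}, the extra coordinate contributions are exponentially small. The outer $2^{2n^2}$ kills the coefficients, while the inner $2^{2(i-1)n}$ factors (which will be crucial later, presumably to force uniqueness of the shortest vector in $\bL(\bB')$) are still small enough in the squared norm to be dominated.
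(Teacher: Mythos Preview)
Your proof is correct and follows essentially the same approach as the paper: both start from the identity $\|\bv'\|^2 = \|\bv\|^2 + \sum_i \alpha_i^2 \, 2^{4(i-1)n-4n^2}$, deduce the lower bound immediately, and for the upper bound plug in a shortest vector and invoke Lemma~\ref{lll} to get $\alpha_i^2 < 2^{3n}$. The only cosmetic difference is in the final numerical estimate: the paper sums the geometric series $\sum_{i=0}^{n-1} 2^{4in-4n^2} = (2^{4n^2}-1)/\bigl((2^{4n}-1)2^{4n^2}\bigr)$ exactly, whereas you bound each term by the largest one ($2^{-4n}$) and multiply by $n$, which also gives at most $2^{-n/2}$ for $n$ not too small.
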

\begin{proof}
The first inequality follows from the fact that the length of the vectors can't get shorter in $\bL(\bB^{'})$. For the second inequality, let $\bv$ be a shortest vector in $\bL = \bL(\bB)$ such that $\bv = \displaystyle\sum\limits_{i}^{n}\alpha_i \bb_i$. Then from Lemma \ref{lll}, $|\alpha_i| < 2^{3n/2}$, and hence
\begin{align*}
\ltsqn{\displaystyle\sum\limits_{i=1}^{n} \alpha_i \bb_i^{'}} &< \lambda_1^2(\bL) + \displaystyle\sum\limits_{i=0}^{n-1}\alpha_{i+1}^2\frac{2^{4in}}{2^{4n^2}}\\
&< \lambda_1^2(\bL) + 2^{3n}\frac{2^{4n^2} - 1}{(2^{4n} - 1)2^{4n^2}} \\
&< \lambda_1^2(\bL) + 2^{-n/2} \; .
\end{align*}
\qed \end{proof}

\begin{lemma}\label{lem:lambda1}
Let $\bv_1, \bv_2 \in \bL(\bB)$ be two distinct vectors such that $\ltn{\bv_1}=\ltn{\bv_2}=\lambda_1(\bL(\bB))$ and let $\bv_1^{'}, \bv_2^{'} \in \bL(\bB^{'})$ be the corresponding vectors. Then, $|\ltsqn{\bv_1^{'}}-\ltsqn{\bv_2^{'}}| > {2^{-4n^2}}$
\end{lemma}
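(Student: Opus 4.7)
The plan is to expand both sides directly in coordinates and then run an integer-separation argument exploiting the rapidly growing weights in the extra coordinates.

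First I would fix integer coefficients $\bv_1=\sum_i\alpha_i\bb_i$ and $\bv_2=\sum_i\beta_i\bb_i$ (unique because $\bB$ is a basis). From the explicit form of $\bB'$, the only new contribution to $\|\bv_j'\|^2$ comes from the extra $(m+i)$-th coordinates, which for $\bv_1'$ equal $\alpha_i\cdot 2^{2(i-1)n-2n^2}$. Hence
\[
\|\bv_1'\|^2-\|\bv_2'\|^2 \;=\; \bigl(\|\bv_1\|^2-\|\bv_2\|^2\bigr)\;+\;2^{-4n^2}\sum_{i=1}^{n}\bigl(\alpha_i^2-\beta_i^2\bigr)\,2^{4(i-1)n}.
\]
The first bracket vanishes because $\|\bv_1\|=\|\bv_2\|=\lambda_1(\bL(\bB))$. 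Writing $S:=\sum_{i}(\alpha_i^2-\beta_i^2)\,2^{4(i-1)n}$, the claim reduces to showing that $|S|\geq 1$ (so that $|\|\bv_1'\|^2-\|\bv_2'\|^2|$ is at least $2^{-4n^2}$, with strictness coming from the leading-term bound below).

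Next I would argue $S$ is a nonzero integer. Integrality is immediate. For nonvanishing, let $k$ be the largest index with $\alpha_k^2\neq \beta_k^2$. The $k$-th term has absolute value at least $2^{4(k-1)n}$. Lemma \ref{lll} bounds the remaining coefficients by $|\alpha_i|,|\beta_i|<2^{3n/2}$, so $|\alpha_i^2-\beta_i^2|<2^{3n}$ for each $i$, and the contribution of all lower-order terms is controlled by the geometric sum
\[
\sum_{i<k}\bigl|\alpha_i^2-\beta_i^2\bigr|\,2^{4(i-1)n} \;<\; 2^{3n}\cdot\frac{2^{4(k-1)n}-1}{2^{4n}-1} \;<\; 2^{-n+1}\cdot 2^{4(k-1)n}.
\]
Subtracting, $|S|\geq (1-2^{-n+1})\,2^{4(k-1)n}>0$, and since $S$ is an integer we obtain $|S|\geq 1$, which already gives $|\|\bv_1'\|^2-\|\bv_2'\|^2|\geq 2^{-4n^2}$; for $k\geq 2$ the same estimate in fact yields the much larger bound $|S|\gtrsim 2^{4n}$, so the strict inequality of the statement is comfortable in that range.

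The main obstacle is twofold. First, one must verify that the index $k$ exists at all, i.e.\ that not all $\alpha_i^2=\beta_i^2$. If $\alpha_i^2=\beta_i^2$ for every $i$, then $\beta_i=\pm\alpha_i$ coordinatewise; combined with $\|\bv_1\|=\|\bv_2\|=\lambda_1$ this can only occur when $\bv_2=\pm\bv_1$, which is the degenerate case that must be ruled out by the interpretation of ``distinct'' (and which the intended application of this lemma sidesteps by pairing $\bv$ with $-\bv$). Second, the stated strict inequality $>2^{-4n^2}$ (as opposed to $\geq$) requires extra care precisely when $k=1$ and $|\alpha_1^2-\beta_1^2|=1$; handling this borderline case amounts to showing that in that configuration $\bv_2$ cannot actually realize $\lambda_1$ alongside $\bv_1$, which is the one spot where a slightly more delicate geometric argument may be required.
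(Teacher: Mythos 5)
Your approach is the same one the paper uses: expand the squared norms, observe that the original-lattice contributions cancel, isolate the leading weight $2^{4(k-1)n-4n^2}$, and dominate the lower-order terms by a geometric sum bounded via the coefficient estimate $|\alpha_i|,|\beta_i|<2^{3n/2}$ from Lemma~\ref{lll}. Your choice of $k$ as the largest index with $\alpha_k^2\neq\beta_k^2$ is in fact the correct one and repairs a small gap in the paper's own proof, which takes $j$ to be the largest index with $\alpha_j\neq\beta_j$ and then silently uses $|\alpha_j^2-\beta_j^2|\geq 1$; that step fails whenever $\alpha_j=-\beta_j$. Both edge cases you flag are genuine. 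For the first, the claim that $\alpha_i^2=\beta_i^2$ for all $i$ forces $\bv_2=\pm\bv_1$ can be closed with exactly the parallelogram argument used in Lemma~\ref{lem:svcount}: with $\epsilon_i=\beta_i/\alpha_i\in\{\pm 1\}$, split $\bv_1=\bu+\bw$, $\bv_2=\bu-\bw$ where $\bu=\sum_{\epsilon_i=1}\alpha_i\bb_i$ and $\bw=\sum_{\epsilon_i=-1}\alpha_i\bb_i$; then $\|\bu\|^2+\|\bw\|^2=\lambda_1^2$, so both nonzero would give lattice vectors strictly shorter than $\lambda_1$, hence one is zero and $\bv_2=\pm\bv_1$. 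This means the lemma does require ``distinct'' to be read as ``linearly independent,'' since $\bv_2=-\bv_1$ is a genuine counterexample to the literal statement; the intended application in the theorem only compares linearly independent vectors, so this is harmless. For the second, you are right that when $k=1$ and $|\alpha_1^2-\beta_1^2|=1$ the bound is exactly $2^{-4n^2}$, not strictly greater; the paper's chain of inequalities is also only $\geq$ at that point. The honest statement is $\geq 2^{-4n^2}$, which is all the downstream theorem needs, so no further geometric argument is required to rule out that configuration.
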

\begin{proof}
Let $\bv_1 = \displaystyle\sum\limits_{i=1}^{n} \alpha_i \bb_i$ and $\bv_2=\displaystyle\sum\limits_{i=1}^{n} \beta_i \bb_i$. Let $j \in [n]$ be the largest number such that $\alpha_j \neq \beta_j$.  Then,

\begin{eqnarray*}
|\ltsqn{\bv_1^{'}} - \ltsqn{\bv_2^{'}}| &=& |\displaystyle\sum\limits_{i=1}^{n} (\alpha_i^2-\beta_i^2) (\frac{2^{2(i-1)n}}{2^{2n^2}})^2|\\
&>& |(\alpha_j^2 - \beta_j^2)\cdot \frac{2^{4(j-1)n}}{2^{4n^2}} + \displaystyle\sum\limits_{i=1}^{j-1} (\alpha_i^2 - \beta_i^2)\cdot \frac{2^{4(i-1)n}}{2^{4n^2}}| \\
&>& \frac{2^{4(j-1)n}}{2^{4n^2}} -2^{3n} \displaystyle\sum\limits_{i=1}^{j-1} \frac{2^{4(i-1)n}}{2^{4n^2}} \\
&=& \frac{2^{4(j-1)n}}{2^{4n^2}} -2^{3n}  \frac{2^{4(j-1)n} -1}{2^{4n^2} (2^{4n} - 1)} \\
&>& \frac{1}{2^{4n^2}} \; .
\end{eqnarray*}
\qed \end{proof}

\begin{lemma}\label{lem:3}
Let $\bv, \bv_1, \bv_2$ be vectors in an integer lattice $\bL = \bL(\bB)$. 
\begin{itemize}
\item If $\ltn{\bv_1} > \ltn{\bv_2}$, then $\ltsqn{\bv_1}-\ltsqn{\bv_2} \geq 1$.
\item If $\ltn{\bv}>\lambda_1(\bL)$, then if $\bv^{'} \in \bL(\bB^{'})$ is the corresponding vector, then $\ltsqn{\bv^{'}} > \lambda_1^2(\bB)+1$.
\end{itemize}
\end{lemma}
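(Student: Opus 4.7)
The plan is to treat the two bullet points in sequence; both rest on the observation that in an integer lattice, squared norms are integers.

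For the first bullet, I would observe that because $\bL$ is an integer lattice (its basis vectors lie in $\Z^m$), every lattice vector has integer coordinates, so $\ltsqn{\bv_1}$ and $\ltsqn{\bv_2}$ are both non-negative integers. The hypothesis $\ltn{\bv_1} > \ltn{\bv_2}$ implies $\ltsqn{\bv_1} > \ltsqn{\bv_2}$, and two distinct non-negative integers differ by at least $1$, giving $\ltsqn{\bv_1} - \ltsqn{\bv_2} \geq 1$.

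For the second bullet, let $\bu$ be a shortest vector of $\bL$, so $\ltn{\bu} = \lambda_1(\bL)$ and $\ltn{\bv} > \ltn{\bu}$. Applying the first bullet to $\bv_1 = \bv$ and $\bv_2 = \bu$ yields $\ltsqn{\bv} \geq \lambda_1^2(\bL) + 1$. Now, if $\bv = \sum_i \alpha_i \bb_i$ then the corresponding $\bv' = \sum_i \alpha_i \bb_i'$ has exactly the same first $m$ coordinates as $\bv$, so
\[
\ltsqn{\bv'} \;=\; \ltsqn{\bv} + \sum_{i=1}^{n} \alpha_i^{2}\cdot\frac{2^{4(i-1)n}}{2^{4n^2}}.
\]
Since $\ltn{\bv} > \lambda_1(\bL) > 0$, the vector $\bv$ is nonzero, so at least one $\alpha_i$ is nonzero and the appended sum is strictly positive. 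Combining with the first step,
\[
\ltsqn{\bv'} \;>\; \ltsqn{\bv} \;\geq\; \lambda_1^2(\bL) + 1,
\]
which is the desired strict inequality.

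There is no real obstacle: the only subtlety is to make sure the inequality in the second bullet is strict, and that is handled by observing that the extra (appended) coordinates contribute a strictly positive quantity whenever $\bv \neq 0$, which is guaranteed by $\ltn{\bv} > \lambda_1(\bL)$.
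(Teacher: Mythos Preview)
Your proof is correct and follows essentially the same approach as the paper: both bullets hinge on the fact that squared $\ell_2$ norms in an integer lattice are integers, and the second bullet is obtained from the first together with $\ltsqn{\bv'} > \ltsqn{\bv}$. Your argument is in fact slightly more explicit than the paper's, since you spell out why the appended coordinates contribute a strictly positive amount.
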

\begin{proof}
The first item follows from the fact that for integer lattices the $\ell_2^2$ norm of a vector is also an integer. The second item follows from the fact that $\bv$ is not the shortest vector in $\bL(\bB)$ and $\ltsqn{\bv^{'}} > \ltsqn{\bv}$.
\qed \end{proof}

Without loss of generality, we can assume $\bL(\bB)$ to be an integer lattice, and hence, using the above lemma, we get the following result.

\begin{theorem}
Given a lattice $\bL = \bL(\bB)$, there is a deterministic polynomial reduction transforming it to another lattice $\bL' = \bL(\bB^{'})$ such that $\frac{\lambda_2(\bL')}{\lambda_1(\bL')} > \sqrt{1+\frac{1}{c \cdot 2^{4n^2}\lambda_1^2(\bL)}}$ for some $c \leq 1/4$. In particular, $\tdusvp$ is {\bf NP}-hard under randomized reductions. 
\end{theorem}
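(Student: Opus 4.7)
The main statement collects Lemmas \ref{lem:1}, \ref{lem:lambda1}, and \ref{lem:3} into one quantitative gap bound on $\bL' = \bL(\bB')$. The plan is to show that $\lambda_1(\bL')$ must be attained by the image $(\bv^*)'$ of some shortest vector $\bv^*$ of $\bL$, and that every other lattice vector of $\bL'$ is at least $2^{-4n^2}$ longer in squared norm. Combined with the upper bound on $\lambda_1^2(\bL')$ from Lemma \ref{lem:1}, this will immediately give the claimed ratio.

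First I would assume WLOG that $\bL$ is an integer lattice, so that the hypotheses of Lemma \ref{lem:3} are available, and fix a shortest vector $\bv^* \in \bL$. By Lemma \ref{lem:1}, $\ltsqn{(\bv^*)'} \leq \lambda_1^2(\bL) + 2^{-n/2}$. Now consider any primitive $\bw \in \bL$ and its image $\bw' \in \bL'$. If $\ltn{\bw} > \lambda_1(\bL)$, Lemma \ref{lem:3} forces $\ltsqn{\bw'} > \lambda_1^2(\bL) + 1$, which is strictly larger than $\ltsqn{(\bv^*)'}$. Hence $\lambda_1(\bL')$ is realized by $(\bv^*)'$ for some shortest $\bv^*$ of $\bL$. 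To analyze $\lambda_2(\bL')$, pick any $\bw' \in \bL'$ linearly independent of $(\bv^*)'$. In the case $\ltn{\bw} = \lambda_1(\bL)$, Lemma \ref{lem:lambda1} yields $\ltsqn{\bw'} - \ltsqn{(\bv^*)'} > 2^{-4n^2}$; in the case $\ltn{\bw} > \lambda_1(\bL)$, Lemma \ref{lem:3} gives the much stronger bound $\ltsqn{\bw'} > \lambda_1^2(\bL) + 1 \geq \lambda_1^2(\bL') + 1 - 2^{-n/2}$, which also dominates $\lambda_1^2(\bL') + 2^{-4n^2}$. Either way $\lambda_2^2(\bL') \geq \lambda_1^2(\bL') + 2^{-4n^2}$.

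Taking the ratio then gives
\[
\frac{\lambda_2^2(\bL')}{\lambda_1^2(\bL')} \;\geq\; 1 + \frac{2^{-4n^2}}{\lambda_1^2(\bL')} \;\geq\; 1 + \frac{2^{-4n^2}}{\lambda_1^2(\bL) + 2^{-n/2}},
\]
and using $\lambda_1^2(\bL) \geq 1$ (again from integrality) I can absorb the additive $2^{-n/2}$ into a constant $c \leq 1/4$ to obtain the claimed form after taking square roots.

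For the ``in particular'' statement, I would compose this deterministic reduction with a randomized {\bf NP}-hardness reduction for $\tgsvp$ (for instance, Ajtai's or Khot's). Given an instance $(\bB,d)$ of $\tgsvp$, produce $(\bB', d')$ with $(d')^2 = d^2 + \tfrac{1}{2}$, viewed as an instance of $\tdusvp_\gamma$ for $\gamma = \sqrt{1 + 1/(c \cdot 2^{4n^2}\lambda_1^2(\bL))}$. The uniqueness promise holds by the ratio bound above; correctness uses that $\lambda_1^2(\bL)$ is an integer and $\lambda_1^2(\bL) \leq \lambda_1^2(\bL') \leq \lambda_1^2(\bL) + 2^{-n/2}$, so $\lambda_1^2(\bL) \leq d^2$ iff $\lambda_1^2(\bL') \leq (d')^2$. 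The mild obstacle here is purely bookkeeping: since the uniqueness factor $\gamma$ depends on the unknown $\lambda_1(\bL)$, I must specify it in terms of a polynomial upper bound on $\lambda_1(\bL)$ (e.g.\ $\ltn{\bb_1}$, which is polynomial in the input size for an LLL-reduced integer basis) so that the resulting $\gamma$ is a legitimate input-size-dependent parameter.
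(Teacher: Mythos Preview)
Your proposal is correct and follows essentially the same route as the paper: combine Lemma~\ref{lem:1}, Lemma~\ref{lem:lambda1}, and Lemma~\ref{lem:3} to get $\lambda_2^2(\bL')-\lambda_1^2(\bL')>2^{-4n^2}$, then divide and bound $\lambda_1^2(\bL')$ via Lemma~\ref{lem:1}. You simply spell out more carefully than the paper why $\lambda_1(\bL')$ is achieved by the image of a shortest vector of $\bL$ and split the $\lambda_2$ analysis into the two cases; the paper compresses all of this into one sentence, and your handling of the ``in particular'' clause (composing with a randomized $\tgsvp$ hardness reduction and replacing $\lambda_1(\bL)$ in the uniqueness factor by an input-size bound such as $\|\bb_1\|$) is exactly the intended reading.
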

\begin{proof}
From Lemma \ref{lem:lambda1} and Lemma \ref{lem:3}, we have that $\lambda_2^2(\bL') - \lambda_1^2(\bL')>2^{-4n^2}$, which implies $\frac{\lambda_2(\bL')}{\lambda_1(\bL')} > \sqrt{1+\frac{1}{2^{4n^2}\lambda_1^2(\bL')}}$. From Lemma \ref{lem:1}, $\lambda_1^2(\bL') < \lambda_1^2(\bL)+\frac{1}{2^{n/2}}$, and hence $\frac{\lambda_2(\bL')}{\lambda_1(\bL')}$ is at least $1+\frac{c}{2^{4n^2}\lambda_1^2(\bL)}$, for some constant $c\leq \frac{1}{4}$.
\qed\end{proof}

We would like to point out that we assumed in Lemma \ref{lem:3} that the lattice $\bL$ is an integer lattice. Hence, $\lambda_1(\bL)$ can be $O(2^{cn}\cdot\texttt{input size})$ and hence, $\frac{\lambda_2(\bL^{'})}{\lambda_1(\bL^{'})}$ can be arbitrarily close to~1. The original Kumar-Sivakumar \cite{KS01} proof also suffers with the same problem. The idea there is to show that the number of lattice points in a ball centered at the origin and of radius $\sqrt{2}\lambda_1(\bL)$ is at most $2^{n}$. Then one can create a new lattice $\bL^{'}$ with a unique short vector $\bv$ with $\lambda_1(\bL)\leq ||\bv|| < \sqrt{2}\lambda_1(\bL)$. In the worst case, the ratio of $\lambda_2^2(\bL^{'})$ and $\lambda_1^2(\bL^{'})$ for the new lattice (assuming that the original lattice was integer lattice) can be as small as $\frac{2\lambda_1^2(\bL)}{2\lambda_1^2(\bL)-1}$, which is $(1+\frac{1}{2\lambda_1^2(\bL)})$. As $\lambda_1(\bL)$ is $O(2^{cn}\cdot\texttt{input size})$, we get $(1+1/exp)$ hardness of $\tusvp$ in both cases.

\subsection{Deterministic hardness of $\tusvp$ in $\ell_\infty$ norm}

In this section, we show that the $\tusvp$ problem is $\tnp$-hard in the $\ell_\infty$ norm. For simplicity of description, we assume that all norms in this section are $\ell_\infty$ norms. Also, as before, the lattice $\bL$ is an integer lattice. 

For the LLL reduced basis $\{\tilde{\bb_1}, \dots, \tilde{\bb_n}\}$, there is a constant $c$ such that $\ltn{\tilde{\bb_1}} \leq 2^{c(i-1)}\ltn{\tilde{\bb_i}}$, for all $i \in [n]$. An induction proof as in Lemma \ref{lll} gives the following corollary.

\begin{corollary}
If the basis $\bB$ is LLL reduced then for the shortest vector $\bu = \sum_i \alpha_i \bb_i$, one has that for all $i$, $|\alpha_i|<2^{(c+1)n}$, for some constant $c$. 
\end{corollary}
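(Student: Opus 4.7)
The plan is to follow the structure of the proof of Lemma~\ref{lll}, replacing the $\ell_2$ bounds with their $\ell_\infty$ counterparts and paying only a polynomial-in-$n$ cost whenever a genuinely Euclidean step intervenes. In particular, the paragraph preceding the corollary already supplies the $\ell_\infty$ analogue of the decay inequality $\|\tilde{\bb}_1\|\leq 2^{c(i-1)}\|\tilde{\bb}_i\|$, so the same induction skeleton as in Lemma~\ref{lll} should go through.

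The first step is to observe that, since $\bu$ is a shortest nonzero lattice vector in $\ell_\infty$ and $\bb_1=\tilde{\bb}_1$, we have $\ltn{\bu}\leq \ltn{\tilde{\bb}_1}$. The projection identity used in Lemma~\ref{lll} is intrinsically Euclidean: the orthogonal projection of $\bu$ onto $\tilde{\bb}_{n-k}$ has Euclidean norm $|\alpha_{n-k}+\sum_{j=n-k+1}^{n}\mu_{j,n-k}\alpha_j|\cdot\|\tilde{\bb}_{n-k}\|_2$. To move between $\ell_\infty$ and $\ell_2$, I would use the standard inequalities $\|\bv\|_\infty \leq \|\bv\|_2 \leq \sqrt{m}\,\|\bv\|_\infty$, giving
\[
\sqrt{m}\cdot\ltn{\tilde{\bb}_1} \;\geq\; \|\bu\|_2 \;\geq\; \Bigl|\alpha_{n-k}+\sum_{j=n-k+1}^{n}\mu_{j,n-k}\alpha_j\Bigr|\cdot \ltn{\tilde{\bb}_{n-k}}.
\]

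Combining this with the $\ell_\infty$ decay $\ltn{\tilde{\bb}_1}\leq 2^{c(n-k-1)}\ltn{\tilde{\bb}_{n-k}}$ and $|\mu_{j,n-k}|\leq \tfrac12$ yields the recursion $|\alpha_{n-k}| \leq \sqrt{m}\cdot 2^{c(n-k-1)} + \tfrac12\sum_{j=0}^{k-1}|\alpha_{n-j}|$, which is exactly the recursion in the proof of Lemma~\ref{lll} up to the factor $\sqrt{m}$ and the new constant $c$. A straightforward induction on $k$, modeled on that proof, then gives $|\alpha_{n-k}|\leq \sqrt{m}\cdot 2^{cn+k}$, and hence $|\alpha_i|\leq \sqrt{m}\cdot 2^{(c+1)n}$ for all $i$. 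Since $m$ is polynomial in $n$, absorbing $\sqrt{m}$ into the exponent by slightly enlarging $c$ gives the claimed bound.

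The only subtlety is that orthogonal projection is not norm-nonincreasing in $\ell_\infty$, so the Euclidean projection argument of Lemma~\ref{lll} cannot be translated verbatim. The standard $\sqrt{m}$ norm-conversion above sidesteps this, and because the target bound is already exponential in $n$ the polynomial overhead is harmless.
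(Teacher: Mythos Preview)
Your proposal is correct and follows essentially the same route the paper intends: the paper simply states that ``an induction proof as in Lemma~\ref{lll}'' yields the bound, and you carry out exactly that induction, being explicit about the $\ell_\infty$--$\ell_2$ conversion (the $\sqrt{m}$ factor) that the paper suppresses. Your observation that orthogonal projection is not contractive in $\ell_\infty$, and that the $\sqrt{m}$ overhead is harmless against an exponential target, is the only genuine addition beyond the paper's one-line pointer.
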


We use the following theorem by P. van Emde Boas  \cite{B81}.
 
 \begin{theorem}\label{dinur}
 The problem $\tsvp$ in $\ell_\infty$ norm is $\tnp$-hard.
 \end{theorem}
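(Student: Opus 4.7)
The plan is to reduce from a canonical $\tnp$-hard number-theoretic problem, most naturally Subset Sum or Partition, to $\tsvp$ in the $\ell_\infty$ norm. Given integers $a_1,\ldots,a_n$ and target $S$, I would build a lattice in dimension $n+1$ (or $n+2$) whose basis has one ``big'' coordinate carrying a large multiplier $M$ and several ``small'' coordinates that index the integer coefficients used in a lattice combination. Specifically, one attaches the row $(Ma_1,\ldots,Ma_n,-MS)$ on top of an identity-like block that records the chosen coefficients, so that an integer combination $\sum x_i \bb_i + y\bb_{n+1}$ has first coordinate $M(\sum x_i a_i - yS)$ and remaining coordinates encoding (affine images of) the $x_i,y$.

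With $M$ taken much larger than any $x_i,y$ that could appear in a short vector, any vector of $\ell_\infty$ norm below $M$ is forced to satisfy $\sum x_i a_i = yS$. The small-coordinate block is then chosen so that ``short'' in $\ell_\infty$ coerces each coefficient to lie in $\{-1,0,1\}$ (or $\{0,1\}$), and so that the all-zero coefficient vector is excluded by an affine shift (for instance, by giving the last basis vector a $1$-entry in every small coordinate, so that the small coordinates read $1-2x_i$ after the shift). Then the $\ell_\infty$ shortest vector has norm exactly $1$ precisely when there is a $\{0,1\}$-combination $x_i$ with $\sum x_i a_i = S$.

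The completeness direction is then immediate: given a Subset Sum witness, explicitly exhibit a lattice vector whose every coordinate is $\pm 1$ (or $0,1$). Soundness amounts to proving that any lattice vector of $\ell_\infty$ norm $\leq 1$ must, by the largeness of $M$, kill the big coordinate and, by the affine structure, have small coordinates forcing the $x_i$ into the desired $0/1$ pattern. The main obstacle — and the reason the construction needs an affine shift rather than a plain identity block — is ruling out the ``trivial'' short vectors: one must prevent short combinations in which $y=0$ and the $x_i$ are some non-trivial $\pm 1$ pattern summing to $0$ (which can exist even when the Subset Sum instance is NO), and one must prevent the zero vector from being counted. Getting the multipliers $M$, the shifts, and the sign structure right so that the only short vectors correspond bijectively to Subset Sum solutions is the crux; the arithmetic bookkeeping after that is routine.
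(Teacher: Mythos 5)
The paper does not actually prove this theorem: it is stated as a black-box citation to van Emde Boas~\cite{B81}, whose very title (``Another NP-complete partition problem and the complexity of computing short vectors in a lattice'') indicates a Partition/Subset-Sum--type reduction, so your high-level plan is in the same spirit as the cited proof.

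That said, your sketch has a concrete soundness gap. You correctly handle $y = 0$: with the $i$-th small coordinate reading $y - 2x_i$, setting $y = 0$ forces every $x_i = 0$, so the only vector with $y = 0$ and $\ell_\infty$-norm at most $1$ is the zero vector. But you never rule out $|y| \ge 2$, and these coefficients can produce spurious short vectors. For even $y$, taking $x_i = y/2$ for all $i$ makes every small coordinate vanish, and the big coordinate $M\bigl(\sum_i x_i a_i - yS\bigr)$ then vanishes exactly when $\sum_i a_i = 2S$, which is the Partition condition and is perfectly compatible with a NO instance of Subset Sum. For odd $y \ge 3$, the constraint $|y - 2x_i| \le 1$ admits $x_i \in \{(y-1)/2,\,(y+1)/2\}$, again yielding unintended short vectors for certain instance values. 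The standard fix is to append one more coordinate that equals $y$ itself (only the last basis vector contributes a $1$ there), which forces $|y| \le 1$ for any vector of $\ell_\infty$-norm at most $1$; with that extra coordinate your case analysis closes and the reduction is correct. Without such a device (or padding the instance to exclude the spurious equalities), the construction as described is not sound, and your own phrase that ``getting the multipliers, the shifts, and the sign structure right \ldots\ is the crux'' is precisely the unresolved part.
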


Now we prove the main result of this section.
 
 \begin{theorem}
 The problem $\tusvp$ in $\ell_\infty$ norm is NP-hard.
 \end{theorem}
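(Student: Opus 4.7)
The plan is to follow the template of Section~3.1 and give a deterministic polynomial-time reduction from $\tsvp$ in the $\ell_\infty$ norm, which is $\tnp$-hard by Theorem~\ref{dinur}, to $\tusvp$ in the $\ell_\infty$ norm. Given an LLL-reduced integer basis $\bB$ for an instance with target norm $d$, I would construct a new basis $\bB'$ by appending carefully chosen tail rows so that, in analogy with Lemmas~\ref{lem:1}--\ref{lem:3}, a shortest vector of $\bL(\bB)$ corresponds to a uniquely shortest vector of $\bL(\bB')$ and every other lattice vector stays strictly longer.

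The direct analogue of the $\ell_2$ construction does not work because the $\ell_\infty$ norm is a maximum: tiny diagonal tail entries are invisible to $\|\cdot\|_\infty$ whenever the first $m$ coordinates of $\bv$ already form a non-zero integer vector. Instead, I would choose the tail entries to be comparable in magnitude to the target norm $d$, of the form $c_i = d + \epsilon_i$ placed at position $m{+}i$ for a rapidly increasing sequence of small positive rationals $\epsilon_i$. Using the Corollary's coefficient bound $|\alpha_i| < 2^{(c+1)n}$, the tail contribution $\max_i |\alpha_i|(d + \epsilon_i)$ of $\bv' = \sum_i \alpha_i \bb_i'$ is polynomially controlled, and with an appropriate rescaling of $\bB$ the tail becomes the dominant part of the augmented $\ell_\infty$ norm of a shortest vector of $\bL(\bB)$, whereas the integer-lattice gap $\|\bv\|_\infty \geq d+1$ for non-shortest vectors of $\bL(\bB)$ forces them to remain strictly longer in $\bL(\bB')$.

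To obtain uniqueness up to sign, I would add an aggregation tail coordinate containing $\sum_i \alpha_i \eta_i$ with geometrically spaced integers $\eta_i = 2^{(i-1)K}$ for a sufficiently large $K$: the coefficient bound then ensures that $|\sum_i \alpha_i \eta_i|$ uniquely identifies $\pm\alpha$, and after appropriate scaling this coordinate contributes a $1/2^{\poly(n)}$-small distinguishing increment without dominating the $\ell_\infty$ norm. Combining these ingredients yields $\lambda_2(\bL(\bB'))/\lambda_1(\bL(\bB')) > 1 + 1/2^{\poly(n)}$, so $\bL(\bB')$ is a valid $\tusvp_\gamma$ instance with $\gamma > 1$, and any solution recovers a shortest vector of $\bL(\bB)$.

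The main obstacle is the interplay between the max-norm and the possibly-large coefficient bound. Coordinate-wise sign flips such as $\bb_1+\bb_2$ versus $\bb_1-\bb_2$ are invisible to any purely diagonal tail, so the aggregation coordinate is indispensable; moreover, shortest vectors of $\bL(\bB)$ can have coefficients $|\alpha_i|$ as large as $2^{(c+1)n}$, so the tail entries and the rescaling factor must simultaneously be chosen fine enough to distinguish every pair of $\pm$-equivalence classes of shortest vectors and coarse enough to keep non-shortest vectors of $\bL(\bB)$ strictly longer in $\bL(\bB')$. Striking this balance using the exponential coefficient bound from the Corollary is the main technical step of the proof.
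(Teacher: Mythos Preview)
Your proposal has a genuine gap. The difficulty you correctly identify---that an appended coordinate is invisible to $\|\cdot\|_\infty$ unless it dominates---is not overcome by your construction; it is fatal to it. If the tail is scaled so that it never exceeds the top-$m$ part (which for any nonzero integer lattice vector is at least $1$), then $\|\bv'\|_\infty = \|\bv\|_\infty$ for every $\bv$ and the tail does nothing. If, as you propose, the tail is made to dominate, then $\|\bv'\|_\infty$ equals $\max_i |\alpha_i|(d+\epsilon_i)$ (or a scaled $|\sum_i \alpha_i \eta_i|$), a quantity depending only on the coefficient vector $(\alpha_1,\ldots,\alpha_n)$ and not on $\|\bv\|_\infty$ at all. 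The shortest vector of $\bL(\bB')$ is then whichever lattice point has the smallest coefficients with respect to $\bB$---for instance any single basis vector $\bb_i'$---and bears no relation to a shortest vector of $\bL(\bB)$. No rescaling of $\bB$ bridges this dichotomy: the max norm jumps abruptly between ``tail invisible'' and ``tail dominant,'' and there is no intermediate regime in which an appended coordinate perturbs the norm by a small controlled amount.

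The paper's proof does not append rows. Instead it adds the tiny quantity $\epsilon_i = 2^{2(i-1)\eta}/2^{2\eta^2}$ to \emph{every existing entry} of $\bb_i$. Then for $\bv = \sum_i \alpha_i \bb_i$, each coordinate of the corresponding $\bv'$ equals the same coordinate of $\bv$ plus the single scalar $\delta = \sum_i \alpha_i \epsilon_i$. Because the original coordinates are integers and $|\delta| \ll 1$, the coordinate achieving the maximum is unchanged, and $\|\bv'\|_\infty$ differs from $\|\bv\|_\infty$ by essentially $\pm\delta$. The geometric spacing of the $\epsilon_i$ together with the coefficient bound from the Corollary then makes $\delta$ separate any two coefficient vectors, so distinct shortest vectors of $\bL(\bB)$ acquire distinct lengths in $\bL(\bB')$. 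Your aggregation sum $\sum_i \alpha_i \eta_i$ is precisely this $\delta$; the missing idea is that it must be folded into every existing coordinate rather than placed in a new one.
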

 \begin{proof}
 We take the instance resulting from Theorem \ref{dinur} and make the shortest vector unique. Let $\eta=(c+1)n$, then for all $i \in [n]$, $|\alpha_i|<2^{\eta}$. Given the basis $\{\bb_1, \dots, \bb_n\}$, we perturb the basis slightly in the following way. The basis vector $\bb_i$ gets $\frac{2^{2(i-1)\eta}}{2^{2\eta^2}}$ added to each of its entries. For the new lattice $\bL^{'}$, we have the following easy to prove observations. The theorem follows from them.
 \begin{itemize}
 \item If $\bv = \sum_i \alpha_i \bb_i \in \bL$ is a shortest vector then the vector $\bv^{'}=\sum_i \alpha_i \bb_i^{'} \in \bL^{'}$. Also, $$\lambda_1(\bL^{'}) \leq \ltn{\bv^{'}} \leq \lambda_1(\bL)+\sum_{i=1}^n \alpha_i \frac{2^{2(i-1)\eta}}{2^{2\eta^2}} = \lambda_1(\bL)+2^{1-\eta}.$$ 
 \item Let $\bv_1, \bv_2 \in \bL$ and $\ltn{\bv_1} > \ltn{\bv_2}$, then $\ltn{\bv_1} - \ltn{\bv_2} \geq 1$, as $\bL$ is an integer lattice.
 \item Let $\bv = \sum_{i \in [n]} \alpha_i \bb_i \in \bL$ and let $\bb_{i,j}$ be the $j$'th entry of $\bb_i$. If $\bv^{'}$ is the vector corresponding to $\bv$ in $\bL^{'}$ and $\ltn{\bv^{'}} = |\sum_{i \in [n]} \alpha_i \bb_{i,j}^{'}|$, for some $j \in [m]$, then $\ltn{\bv} = |\sum_{i \in [n]} \alpha_i \bb_{i,j}|$ for the same $j$ . This follows from the fact that the $\sum_{i\in [n]}\alpha_i \bb_{i,j}$ for all $j$ is an integer, and hence will either be equal to $\ltn{\bv}$ or will be at most $\ltn{\bv}-1$.
 \item Let $\bv_1, \bv_2 \in \bL$ such that $\ltn{\bv_1} = \ltn{\bv_2} = \lambda_1(\bL)$ then $|\ltn{\bv_1^{'}}-\ltn{\bv_2^{'}}| > |\sum_{i}(\alpha_i - \beta_i)\frac{2^{2(i-1)\eta}}{2^{2\eta^2}}|$. Similarly, as in Lemma \ref{lem:lambda1}, we get that $|\ltn{\bv_1^{'}}-\ltn{\bv_2^{'}}| > 2^{-2\eta^2}$.
 \end{itemize}
 \qed \end{proof}

\section{Hardness of \tusvp~ within $1 + 1/n^c$}

The following is a result obtained by letting $\eta = \frac{1}{40}$, $p= 2$, and $k = 1$ in Theorem 3.1 and Theorem 5.1 of \cite{Khot05}. 

\begin{lemma}
\label{lem:Khot}
 For some fixed constants $c_1, c_2$, there exists a polynomial time reduction from a \tsat~ instance of size $n$ to an \tsvp~ instance $(\bB, d)$ where $\bB$ is a  $2N \times N$ integer matrix with $N \leq n^{c_2}$, and $d \leq n^{c_1}$ such that:
\begin{enumerate}
\item If the \tsat~ instance is a YES instance, then with probability at least $9/10$, there exists a non-zero $\bx \in \mathbb{Z}^N$, such that $\| \bx\| \leq d^3$ and $\|\bB \bx\| \leq \sqrt{\frac{7}{8} d}$.
\item If the \tsat~ instance is a NO instance, then with probability at least $9/10$, for any non-zero $\bx \in \mathbb{Z}^N$, $\|\bB \bx\| \geq \sqrt{ d}$.
\end{enumerate}
\end{lemma}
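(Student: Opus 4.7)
The plan is to invoke Khot's randomized reduction from \cite{Khot05} essentially as a black box and merely specialize its free parameters. Khot's Theorem~3.1 produces, from a \tsat~instance of size $n$, a random integer lattice basis together with (i) in the YES case, a short integer combination that is guaranteed to be a short lattice vector, and (ii) in the NO case, a lower bound on the length of every nonzero lattice vector. His statement depends on three parameters: $\eta$ (controlling the multiplicative SVP gap), $p$ (selecting the ambient $\ell_p$ norm), and $k$ (a combinatorial block parameter used in the tensor construction). Theorem~5.1 then packages this output into a $\tsvp$-style statement with an explicit threshold $d$.

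First I would set $p = 2$ so that the conclusion is in the Euclidean norm, which is the norm in which we wish to use the lemma when reducing to $\tusvp_{1+1/\tpoly(n)}$. Next I would set $k = 1$ so that the construction does not tensor at all, which forces the resulting basis into the minimal tall shape $2N \times N$ claimed in the statement. Finally, $\eta = 1/40$ is the specific value at which Khot's gap parameter becomes the concrete constant ratio $\sqrt{7/8}$ versus $1$.

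With these substitutions, each quantitative claim in the conclusion is obtained by reading off the corresponding bound in \cite{Khot05}. The dimension bound $N \leq n^{c_2}$ and the threshold bound $d \leq n^{c_1}$ are the evaluations at fixed $(\eta,p,k)$ of the polynomial dimension and threshold bounds in Theorems~3.1 and~5.1. In the YES case, the vector $\bx$ produced by Khot's completeness clause is a bounded integer combination satisfying $\|\bB\bx\| \leq \sqrt{7d/8}$, and its coefficients inherit the explicit polynomial bound $\|\bx\| \leq d^3$ from the magnitude of the $\{-1,0,1\}$-weights times the integer scaling factors used in Khot's construction. In the NO case, the soundness clause of Theorem~5.1 directly yields $\|\bB\bx\| \geq \sqrt{d}$ for every nonzero $\bx \in \Z^N$. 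The $9/10$ success probability is the success probability of the internal randomized step of Khot's reduction, boosted if necessary by independent repetition.

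The main obstacle, and essentially the only real work, is notational bookkeeping: Khot uses a slightly different normalization for the lattice basis and for the threshold, so one must verify that upon scaling by a common integer denominator (a) the resulting matrix has the stated $2N \times N$ shape with integer entries, (b) the constants $c_1, c_2$ in the dimension and threshold bounds survive the substitution, and (c) the coefficient bound $\|\bx\| \leq d^3$ in the YES case is exactly what Khot's construction delivers at these parameters. No new lattice-theoretic argument is introduced.
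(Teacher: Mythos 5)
Your proposal is correct and matches the paper exactly: the paper gives no proof for this lemma beyond the single line stating that it follows from Theorems~3.1 and~5.1 of Khot's paper by setting $\eta = 1/40$, $p = 2$, $k = 1$, which is precisely the parameter specialization you describe.
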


We state below lemma 4 from \cite{KS01}.

\begin{lemma}
\label{lem:KS}
Let $T \neq \emptyset$ be a finite set of size at most $2^{m}$, and let $T = T_0 \supseteq T_1 \supseteq \cdots \supseteq T_{2m}$ be a sequence of subsets of $T$ defined by a probabilistic process that satisfies the following three properties:
\begin{enumerate}
\item For all $k, \; 0 \leq k < 2m$, and all $x \in T$, $\Pr(x \in T_{k+1} | x \in T_k) = \frac{1}{2}$.
\item For all $x \in T$, $0 \leq k \leq \ell < 2m$, $\Pr(x \in T_{\ell+1}| x \in T_{\ell}, x \in T_k) = \Pr(x \in T_{\ell+1}| x \in T_{\ell})$.
\item For all $k, \; 0 \leq k < 2m$, and all $x,y \in T_k$, $x \neq y$, the events $``x \in T_{k+1}"$ and $``y \in T_{k+1}"$ are independent. 
\end{enumerate}
Then, with probability $\frac{2}{3} - 2^{-m}$, one of the $T_k$'s has exactly one element. 
\end{lemma}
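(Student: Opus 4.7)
The plan is to analyze the stopping time $\sigma = \min\{k \geq 0 : |T_k| \leq 1\}$ (setting $\sigma = \infty$ if no such $k \leq 2m$ exists), and to observe that the event $W$ of interest---some $T_k$ has exactly one element---coincides with $\{\sigma \leq 2m\} \cap \{|T_\sigma| = 1\}$, since once $|T_k|$ hits $0$ it stays at $0$. It thus suffices to lower bound this probability, which I would split into the ``$\sigma$ is reached in time'' part and the ``at $\sigma$ we land on $1$, not $0$'' part.

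For the first part, property 1 together with linearity of expectation gives $E[|T_{2m}|] = |T| \cdot 2^{-2m} \leq 2^{-m}$, so by Markov's inequality $\Pr(\sigma > 2m) \leq \Pr(|T_{2m}| \geq 2) \leq 2^{-m-1}$. For the behavior at the stopping time, properties 1--3 together imply that, conditional on $T_{k-1}$ and on the entire past, each element of $T_{k-1}$ survives to $T_k$ independently with probability $1/2$; hence $|T_k|$ given $|T_{k-1}| = s$ is distributed as $\mathrm{Bin}(s, 1/2)$, independent of the past. Whenever $\sigma = k \geq 1$, minimality of $\sigma$ forces $s \geq 2$, and a one-line Bayes calculation yields
\[
\Pr\bigl(|T_k| = 1 \,\big|\, \sigma = k,\, |T_{k-1}| = s\bigr) = \frac{s \cdot 2^{-s}}{(s+1) \cdot 2^{-s}} = \frac{s}{s+1} \geq \frac{2}{3}.
\]
Summing over $k$ and $s$ (and handling the trivial boundary case $|T_0| = 1$, where $\sigma = 0$ and $W$ holds deterministically) gives $\Pr(W) \geq \tfrac{2}{3}\,\Pr(\sigma \leq 2m) \geq \tfrac{2}{3}(1 - 2^{-m-1}) \geq \tfrac{2}{3} - 2^{-m}$.

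The only subtlety is justifying the application of the Markov property at the stopping time $\sigma$: one must verify that conditioning on the past event $\{|T_j| \geq 2 : j < k-1\}$, which is built into $\{\sigma = k\}$, does not perturb the Bernoulli survival structure at step $k$. This is exactly the purpose of property 2 (Markovian transitions for each element) combined with property 3 (independence across elements at each fixed step); once this is in place, the remainder reduces to the elementary Bayes computation above, and the factor $\tfrac{s}{s+1}$ is minimized precisely at $s = 2$, giving the clean constant $\tfrac{2}{3}$.
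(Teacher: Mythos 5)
The paper itself does not prove this lemma---it is restated as Lemma~4 of \cite{KS01} and cited without proof---so there is no in-paper argument to compare against; you are being asked, in effect, to reconstruct the argument from the cited source. Your proof is correct. The stopping-time decomposition of $W$, the bound $E[|T_{2m}|] = |T|\cdot 2^{-2m} \leq 2^{-m}$ followed by Markov's inequality to get $\Pr(\sigma > 2m) \leq 2^{-m-1}$, and the conditional Bayes computation $s/(s+1) \geq 2/3$ for $s \geq 2$ combine exactly as you claim to give $\Pr(W) \geq \tfrac{2}{3}(1-2^{-m-1}) \geq \tfrac{2}{3} - 2^{-m}$; as far as I can tell this is also essentially the argument in \cite{KS01}. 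One caveat worth spelling out: property~2 as written is vacuous---for $k \leq \ell$ one has $T_\ell \subseteq T_k$, so the event $\{x \in T_\ell\}$ already forces $\{x \in T_k\}$ and the extra conditioning adds nothing. What your factorization $\Pr(|T_\sigma|=1,\,\sigma=k,\,|T_{k-1}|=s) = \Pr(\sigma=k,\,|T_{k-1}|=s)\cdot \tfrac{s}{s+1}$ actually requires is the stronger, clearly intended reading of properties~1--3: conditionally on the entire history $T_0,\ldots,T_{k-1}$, each element of $T_{k-1}$ survives to $T_k$ independently with probability $1/2$. You invoke precisely this when you replace the conditioning event $\{\sigma=k,|T_{k-1}|=s\}$ (which constrains the whole pre-$k$ trajectory) by the single-step event $\{|T_{k-1}|=s, |T_k|\leq 1\}$. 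With that reading made explicit the argument is airtight.
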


The following result is a simpler version of Corollary 3 from \cite{KS01}.
\begin{lemma}
\label{lem:svcount}
Given any arbitrary lattice $\bL$ of rank $n$, the number of lattice points in $\bL$ of length $\lambda_1(\bL)$ is at most $2^{n+1}$.
\end{lemma}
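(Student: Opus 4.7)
The plan is to use the classical pigeonhole trick on the quotient $\bL/2\bL$. Since $\bL$ has rank $n$, the group $\bL/2\bL$ has exactly $2^n$ elements, so it suffices to show that every coset contains at most two vectors of length $\lambda_1(\bL)$, and that the zero coset $2\bL$ contains none.

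For the zero coset: if $\bv = 2\bw$ for some $\bw \in \bL$ and $\ltn{\bv} = \lambda_1(\bL)$, then $\bw$ is a nonzero lattice vector of length $\lambda_1(\bL)/2$, contradicting the definition of $\lambda_1$. Hence no shortest vector lies in $2\bL$.

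The main step is to show that within any nonzero coset, at most one pair $\{\bv, -\bv\}$ of shortest vectors can occur. Suppose $\bu$ and $\bv$ are both shortest vectors lying in the same coset of $\bL/2\bL$, with $\bu \neq \pm \bv$. Then $\bu - \bv \in 2\bL$, so both $(\bu+\bv)/2$ and $(\bu-\bv)/2$ are lattice vectors, and the parallelogram identity gives
\begin{equation*}
\ltsqn{\tfrac{\bu+\bv}{2}} + \ltsqn{\tfrac{\bu-\bv}{2}} \;=\; \tfrac{1}{2}\bigl(\ltsqn{\bu} + \ltsqn{\bv}\bigr) \;=\; \lambda_1^2(\bL).
\end{equation*}
Since $\bu \neq \pm \bv$, neither of the two lattice vectors $(\bu \pm \bv)/2$ is zero, so each has squared length at least $\lambda_1^2(\bL)$; but their squared lengths sum to $\lambda_1^2(\bL)$, a contradiction. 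Thus each nonzero coset contributes at most two shortest vectors $\pm \bv$.

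Combining the two observations, the total number of lattice points of length $\lambda_1(\bL)$ is at most $2 \cdot (2^n - 1) < 2^{n+1}$. I do not expect any real obstacle: the argument is a one-line application of the parallelogram law together with the pigeonhole principle on $\bL/2\bL$; the only care needed is to separately exclude the zero coset so that the bound is not ambiguously off by one.
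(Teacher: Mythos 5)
Your proof is correct and takes essentially the same route as the paper: pigeonhole on the cosets of $\bL/2\bL$, with the parallelogram identity ruling out two shortest vectors $\bu\neq\pm\bv$ in the same coset. The only addition is your observation that the zero coset contains no shortest vectors, which sharpens the count to $2(2^n-1)$; the paper omits this and simply states the bound $2^{n+1}$.
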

\begin{proof}
Let $\bB = (\bb_1, \ldots, \bb_n)$ be the basis of $\bL$. We claim that for any two vectors $\bu \neq \pm \bv \in \bL$ of length $\lambda_1(\bL)$, where $\bu = \displaystyle\sum\limits_{i=1}^n \alpha_i \bb_i$ and $\bv = \displaystyle\sum\limits_{i=1}^n \beta_i \bb_i$, there exists an $i$ such that $\alpha_i \not\equiv \beta_i \pmod 2$. Note that this claim implies the desired result.

Assume, on the contrary, that there exist a $\bu = \displaystyle\sum\limits_{i=1}^n \alpha_i \bb_i$ and $\bv = \displaystyle\sum\limits_{i=1}^n \beta_i \bb_i$ such that $\| \bu\| = \|\bv\| = \lambda_1(\bL)$ and $\alpha_i \equiv \beta_i \pmod 2$ for all $i$. This implies that $\frac{\bu + \bv}{2} \in \bL$ and $\frac{\bu - \bv}{2} \in \bL$. Also, 
\begin{eqnarray*}
\|\frac{\bu + \bv}{2}\|^2 + \|\frac{\bu - \bv}{2}\|^2 &=& \frac{\|\bu\|^2 + \| \bv\|^2 + 2 \langle \bu, \bv \rangle}{4} + \frac{\|\bu\|^2 + \| \bv\|^2 - 2 \langle \bu, \bv \rangle}{4} \\
&=& \frac{\|\bu\|^2 + \| \bv\|^2}{2} = \left(\lambda_1(\bL)\right)^2 \; .
\end{eqnarray*}
Since, $\bu \neq \pm \bv$, this implies that $0 < \|\frac{\bu + \bv}{2}\| < \lambda_1(\bL)$ and $0 < \|\frac{\bu - \bv}{2}\| < \lambda_1(\bL)$, which is a contradiction. 
\qed \end{proof}

We now prove the main result of this section.

\begin{theorem}
For some fixed constants $c_1, c_2, c$, there exists a polynomial time reduction from a \tsat~ instance of size $n$ to a sequence of lattice basis $\bB_i$, $1 \leq i \leq 2N+2$, and $d$, where $\bB_i$'s are $2N \times N$ integer matrices with $N \leq n^{c_2}$, and $d \leq n^{c_1}$ such that:
\begin{enumerate}
\item If the \tsat~ instance is a YES instance, then with probability at least $1/2$, there exists an $i$ such that $\bL(\bB_i)$ has a $1+\frac{1}{N^{c}}$-unique shortest vector of length at most $\sqrt{\frac{7}{8} d}$.
\item If the \tsat~ instance is a NO instance, then with probability at least $9/10$, for all $i$, the shortest vector of $\bL(\bB_i)$ is of length at least $\sqrt{ d}$.
\end{enumerate}
\end{theorem}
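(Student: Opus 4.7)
The plan is to combine Khot's SAT-to-SVP reduction (Lemma~\ref{lem:Khot}) with the Kumar--Sivakumar sparsification trick (Lemma~\ref{lem:KS}), and then exploit integrality to amplify mere uniqueness of a shortest pair into a $1+1/N^c$ gap between $\lambda_1$ and $\lambda_2$. First, I apply Lemma~\ref{lem:Khot} to the input SAT instance to obtain an integer lattice basis $\bB$ (of size $2N\times N$) and a threshold $d$, so that in the YES case $\lambda_1(\bL(\bB)) \le \sqrt{7d/8}$ and in the NO case $\lambda_1(\bL(\bB)) \ge \sqrt{d}$, each with probability $\ge 9/10$.

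Next, setting $m := N+1$, I would sample $2m=2N+2$ hash vectors $\br_1,\ldots,\br_{2m}$ independently and uniformly from $\{0,1\}^{2N}$, and define the sublattices
\[
\bL_k \;=\; \{\bv \in \bL(\bB) : \langle \br_i,\bv\rangle \equiv 0 \pmod 2 \text{ for all } i\le k\}, \qquad k=0,1,\ldots,2m-1,
\]
outputting (a polynomial-time computable basis $\bB_{k+1}$ of) each $\bL_k$. The NO case is then immediate: $\bL_k \subseteq \bL(\bB)$ for every $k$, so every nonzero vector of every $\bL_k$ inherits $\|\cdot\|\ge\sqrt{d}$ with probability at least $9/10$.

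For the YES case, let $T$ denote the set of unordered pairs $\{\pm\bv\}$ of shortest vectors of $\bL(\bB)$; by Lemma~\ref{lem:svcount}, $|T|\le 2^{N}\le 2^m$. The key technical claim is that for each pair $\{\pm\bv\}\in T$ the value $\langle \br,\bv\rangle \bmod 2$ is uniform over $\{0,1\}$ and that these hashes are pairwise independent across distinct pairs; under this claim the nested sets $T_k := \{\{\pm\bv\}\in T : \bv \in \bL_k\}$ satisfy the three hypotheses of Lemma~\ref{lem:KS}, so with probability at least $2/3 - 2^{-m}$ some $T_k$ is a singleton. Combining with Khot's $9/10$ guarantee yields overall probability at least $1/2$ that some $\bL_k$ has a unique pair $\{\pm\bv\}$ of shortest vectors of length $\ell \le \sqrt{7d/8}$. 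To upgrade this to the desired gap, I use integrality: any lattice vector $\bu\in\bL_k$ linearly independent of $\bv$ has $\|\bu\|^2 \in \mathbb{Z}_{>0}$ and $\|\bu\|^2 > \ell^2$ (else it would form a second shortest pair, contradicting uniqueness of $T_k$), hence $\|\bu\|^2 \ge \ell^2 + 1$. This yields
\[
\frac{\lambda_2(\bL_k)}{\lambda_1(\bL_k)} \;\ge\; \sqrt{1+\tfrac{1}{\ell^2}} \;\ge\; \sqrt{1+\tfrac{8}{7d}} \;\ge\; 1+\tfrac{1}{N^c}
\]
for an appropriate $c=c(c_1,c_2)$, since $d\le n^{c_1}$ and $N\le n^{c_2}$.

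The main obstacle I foresee is the parity claim in the YES case: one must verify that every shortest vector of Khot's lattice has at least one odd coordinate (so that $\langle\br,\bv\rangle\bmod 2$ is genuinely uniform), and one must check the pairwise independence of these hashes across distinct shortest pairs. This typically follows either from a structural property of Khot's construction (primitivity mod $2$ of the relevant integer vectors) or by a mild preprocessing step that attaches a parity gadget to $\bB$ ensuring the required non-triviality modulo $2$ without significantly altering $\lambda_1$ or $\lambda_2$.
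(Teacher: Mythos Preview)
Your overall strategy---Khot's reduction, then Kumar--Sivakumar sparsification to isolate a single shortest pair, then the integrality step $\lambda_2^2\ge\lambda_1^2+1$ to boost mere uniqueness into a $1+1/N^c$ ratio---is exactly the paper's. The one substantive difference is in how the sparsification is implemented, and that difference is precisely what dissolves the obstacle you flag at the end.

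You hash the \emph{lattice vectors} $\bv=\bB\bx\in\Z^{2N}$ via random $\mathbb{F}_2$-inner products. The paper instead follows \cite{KS01} and hashes the \emph{coefficient vectors} $\bx\in\Z^N$: at step $k$ one draws $W\subseteq[N]$ uniformly and passes to the sublattice $\{\bB_k\bx:\sum_{j\in W}x_j\equiv 0\pmod 2\}$, for which an explicit basis $\bB_{k+1}$ is written down (double one column in $W$, subtract it from the others in $W$). With this choice your parity and pairwise-independence worries evaporate without any appeal to the internals of Khot's lattice or any gadget: the coefficient vector of a shortest $\bv$ is automatically nonzero modulo~$2$ (otherwise $\bv/2\in\bL(\bB_k)$ would be shorter), and the argument proving Lemma~\ref{lem:svcount} shows that distinct shortest pairs $\{\pm\bu\}\neq\{\pm\bv\}$ have coefficient vectors that differ modulo~$2$, so the relevant $\mathbb{F}_2$-linear forms are uniform and pairwise independent. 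By contrast, nothing prevents all entries of $\bB$ (hence of every $\bB\bx$) from being even, in which case your hash is identically zero; so your version genuinely needs the extra structural or preprocessing argument you mention, whereas the paper's does not.

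Everything else in your outline---the NO case via $\bL_k\subseteq\bL(\bB)$, the bound on $|T|$ from Lemma~\ref{lem:svcount}, the invocation of Lemma~\ref{lem:KS} with $m=N+1$, and the integrality step giving $\lambda_2/\lambda_1\ge\sqrt{1+1/\ell^2}\ge 1+1/N^c$---matches the paper (indeed, your squared-norm formulation of the last step is slightly more careful than the paper's).
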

\begin{proof}
Given a \tsat~ instance, consider the pair $(\bB, d)$ using the reduction from Lemma \ref{lem:Khot}. 

We generate, as in \cite{KS01}, a sequence of lattices $\bL(\bB_0), \bL(\bB_1), \ldots, \bL(\bB_{2N+2})$ inductively as follows. Suppose we have generated $\bL(\bB) = \bL(\bB_0), \bL(\bB_1), \ldots, \bL(\bB_{k})$ for some $0 \leq k < 2N+2$. We now show how to generate $\bB_{k+1}$. Let $\bB_k = (\bb_1, \ldots, \bb_N)$. Pick a subset $W \subseteq [N]$ uniformly at random from all subsets of $[N]$. If $W$ is empty, then let $\bB_{k+1} = \bB_k$. Otherwise, pick any $i$ from $W$. For $j \notin W$, let $\bb_j' = \bb_j$, and for $j \in W \setminus \{i\}$, let $\bb_j' = \bb_j - \bb_i$. Finally, let $\bb_i' = 2 \bb_i$ and $\bB_{k+1} = (\bb_1', \bb_2', \ldots, \bb_N')$. 

Note that each of the $\bB_i$'s are $2N \times N$ integer matrices. Also, since $\bL(\bB_i) \subseteq \bL(\bB)$ for all $0 \leq i \leq 2N+2$, therefore, if the \tsat~ instance is a NO instance, then, by Lemma \ref{lem:Khot}, with probability $9/10$, the shortest vector of $\bL(\bB_i)$ is of length at least $\sqrt{ d}$ for all $i$. 

Now, consider the case when the \tsat~ instance is a YES instance. In this case, by Lemma \ref{lem:Khot}, with probability $9/10$, we have $1 \leq \lambda_1(\bL(\bB)) \leq \sqrt{\frac{7}{8} d}$, since, $\bB$ is an integer matrix. The set $T$ is a subset of $\bL(\bB)$ defined as follows:
\begin{equation*}
T = \{ \bv \in \bL(\bB) \;| \; \| \bv\| = \lambda_1(\bL(\bB)) \} \; .
\end{equation*}
Furthermore, we define the sets $T_i$ for $1 \leq i \leq 2N+2$ as $T_i = T \cap \bL(\bB_i)$. By Lemma \ref{lem:svcount}, $|T| \leq 2^{N+1}$. The sets $T_i$, for $1 \leq i \leq 2N+2$ satisfy the conditions of Lemma \ref{lem:KS} for $m = N+1$. Thus, by Lemma \ref{lem:KS}, with probability $\frac{2}{3} - 2^{-N-1}$, there exists a $0 \leq k \leq 2N+2$ such that $|T_k| = 1$. Note that $B_i$ is an integer matrix for all $i$. Thus, since $|T \cap \bL(\bB_k)| = 1$, we see that 
$$
\lambda_2(\bL(\bB_k)) \geq \lambda_1(\bL(\bB_k)) + 1 \geq \lambda_1(\bL(\bB_k)) (1 + \sqrt{\frac{8}{7d}}) \; .
$$ 
Thus, there exists a constant $c$ (which can be computed in terms of $c_1$ and $c_2$) such that with probability $\frac{9}{10} \cdot (\frac{2}{3} - 2^{-N-1}) > \frac{1}{2}$, there exists a $k$ such that $\bL(\bB_k)$ has a $(1+\frac{1}{N^{c}})$-unique shortest vector of length at most $\sqrt{\frac{7}{8} d}$. This concludes the proof. 
\qed \end{proof}

\section{From $\tgsvp \in \tconp$ ($\tcoam$) to $\tdusvp \in \tconp$ ($\tcoam$)}

We now simplify and generalize the $\tusvp_{n^{1/4}} \in \tcoam$ proof by Cai \cite{Cai98}. We first give a simplified description of Cai's proof that uses the idea of the $\tcoam$ proof of \cite{GG98}. Here, one needs to give a $\tcoam$ proof that given a lattice $\bL$ with $n^{1/4}$-unique shortest vector and an integer $d$, $\lambda_1(\bL) > d$. The protocol is as follows. The verifier generates uniform random points $\bp_i \in \bL$ for $i \in \{0,1,\dots,\log_2 (\min_i ||\bb_i||)\}$. For each $i$ the verifier generates a random point $\bz_i \in B(\bp_i, 2^{i-1}t\sqrt{\sqrt{n}-\frac{1}{4}})$. The verifier then sends these points to the prover. The prover then provides the claimed shortest vector $\bv$ (primitive vector) and for the correct range when $2^it<||\bv||\leq 2^{i+1}t$, the correct point $\bp_i\pmod \bv$ which is in $\bL$. If $\lambda_1(\bL) > d$ then the prover can send the correct shortest vector $\bv$ and for the corresponding $i$ the balls corresponding to different choices of $\bp \in \bL$ are disjoint or identical depending on whether the respective centers are congruent modulo the shortest vector $\bv$. So, the prover has no trouble in providing the proof when $\lambda_1(\bL) > d$. If on the other hand $\lambda_1(\bL) \leq d$ and $||v|| > d$, it must be a multiple of the shortest vector or much longer than $\lambda_1(\bL)$. In this case, the balls have lot of overlap and the prover will be caught with high probability.

We show that the above idea can be generalized for any $\tconp$ or $\tcoam$ proof, i.e., we show that for {\em any} factor $\gamma$, if $\tgsvp_\gamma \in \tconp$ then $\tdusvp_{c\sqrt{\gamma}}$ is in $\tconp$ (and similarly for $\tcoam$). This implies, using the result of Aharonov and Regev \cite{AR05} that $\tgsvp_{\sqrt{n}} \in \tconp$, that $\tdusvp_{cn^\frac{1}{4}} \in \tconp$, and any subsequent improvements in the factor for $\tgsvp$ will imply an improvement for $\tdusvp$.

\begin{lemma}
\label{lem:conp}
Let $\bL$ be a lattice such that $\lambda_2(\bL) \geq \gamma\lambda_1(\bL)$, and let $\bv$ be a primitive vector in $\bL$. Then:
\begin{itemize}
\item If $\|\bv\| \neq \lambda_1(\bL)$, then $\lambda_1(\bL_{\bot \bv}) \leq \frac{\|\bv \|}{\gamma} $.
\item If $\|\bv\| = \lambda_1(\bL)$, then $\lambda_1(\bL_{\bot \bv}) \geq \left(\sqrt{ \gamma^2 - \frac{1}{4}}\right) \| \bv\|$.
\end{itemize}
\end{lemma}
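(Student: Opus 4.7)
The plan is to prove each item by picking an appropriate lattice vector, projecting it onto $\tspan(\bv)^{\bot}$, and using the gap $\lambda_2(\bL) \geq \gamma \lambda_1(\bL)$ to control the length of the projection.

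For the first item I would start by showing that $\|\bv\| \geq \gamma \lambda_1(\bL)$. The key observation is that $\bv$ being primitive forces $\bL \cap \tspan(\bv) = \Z\bv$; in particular, if $\|\bv\| \neq \lambda_1(\bL)$ then any lattice vector of length strictly less than $\|\bv\|$ must be linearly independent of $\bv$, since otherwise it would be a nonzero integer multiple of $\bv$ and hence at least as long. Such a vector together with $\bv$ exhibits two linearly independent lattice vectors of length at most $\|\bv\|$, so $\|\bv\| \geq \lambda_2(\bL) \geq \gamma \lambda_1(\bL)$. Next, pick a shortest vector $\bu \in \bL$. The same primitivity argument shows that $\bu$ is not parallel to $\bv$, so its projection onto $\tspan(\bv)^{\bot}$ is a nonzero element of $\bL_{\bot \bv}$ of length at most $\|\bu\| = \lambda_1(\bL) \leq \|\bv\|/\gamma$.

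For the second item, let $\bx$ be a shortest nonzero vector of $\bL_{\bot \bv}$ and let $\bu \in \bL$ be any preimage of $\bx$ under the projection. Since $\bv$ is primitive, any two such preimages differ by an integer multiple of $\bv$, so I may replace $\bu$ by an element of $\bu + \Z \bv$ to ensure that, writing $\bu = \bx + \alpha \bv$, one has $|\alpha| \leq 1/2$. Because $\bx \neq 0$, the vectors $\bu$ and $\bv$ are linearly independent elements of $\bL$, so $\|\bu\| \geq \lambda_2(\bL) \geq \gamma \|\bv\|$. As $\bx \perp \bv$, Pythagoras then gives
\[
\|\bx\|^2 \;=\; \|\bu\|^2 - \alpha^2 \|\bv\|^2 \;\geq\; \gamma^2 \|\bv\|^2 - \tfrac{1}{4}\|\bv\|^2 \;=\; \bigl(\gamma^2 - \tfrac{1}{4}\bigr)\|\bv\|^2,
\]
which is exactly the claimed lower bound on $\lambda_1(\bL_{\bot \bv})$.

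I do not anticipate any real obstacle. The only mildly delicate points are the two appeals to the primitivity of $\bv$: first, to ensure in item one that a short lattice vector not parallel to $\bv$ must exist whenever $\|\bv\| > \lambda_1(\bL)$; second, in item two, to reduce the coefficient $\alpha$ modulo $1$ by an integer shift by $\bv$ inside the lift.
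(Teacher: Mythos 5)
Your proof is correct and follows essentially the same route as the paper: in both items you project a suitable lattice vector onto $\tspan(\bv)^{\bot}$, use primitivity of $\bv$ to control the lift (in item two by shifting by an integer multiple of $\bv$ to get $|\alpha| \leq 1/2$, matching the paper's $\lfloor\alpha\rceil$ rounding), and invoke $\lambda_2(\bL) \geq \gamma\lambda_1(\bL)$ together with Pythagoras. The only difference is that you spell out a couple of steps the paper states tersely, such as why $\|\bv\| \geq \lambda_2(\bL)$ when $\|\bv\| \neq \lambda_1(\bL)$.
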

\begin{proof}
\begin{description}
If $\|\bv\| \neq \lambda_1(\bL)$ and $\bv$ is primitive, then $\|\bv \| \geq \lambda_2(\bL) \geq \gamma \lambda_1(\bL)$. Let $\bu$ be the shortest vector in $\bL$. Then the projection of $\bu$ in the space orthogonal to $\bv$ (say $\bu' \in \bL_{\bot \bv}$) is of length at most $\|\bu\| = \lambda_1(\bL)$. Also, $\bu$ is not parallel to $\bv$, and hence, $\bu' \neq \mathbf{0}$. This implies $$\lambda_1(\bL_{\bot \bv}) \leq \lambda_1(\bL)  \leq \frac{\|\bv \|}{\gamma} \; .$$
 
If $\|\bv\| = \lambda_1(\bL)$, then let $\bu'$   be the shortest vector in $\bL_{\bot \bv}$. Let $\bu'$ be the projection of $\bu \in \bL$ orthogonal to $\bv$. Then $\bu = \bu' + \alpha \bv$ for some $\alpha \in \R$. Since $\bu - \lfloor \alpha \rceil \bv \in \bL$ is not an integer multiple of $\bv$, $\| \bu - \lfloor \alpha \rceil \bv \| \geq \lambda_2(\bL) \geq \gamma \| \bv\|$. Thus, $$\gamma \| \bv\| \leq \|\bu' + (\alpha - \lfloor \alpha \rceil) \bv \| \leq \sqrt{ \|\bu' \|^2 + \frac{1}{4} \| \bv \|^2}\; ,$$ because $\bu'$ is orthogonal to $\bv$. This implies that 
$$\lambda_1(\bL_{\bot \bv}) = \| \bu'\| \geq \left(\sqrt{ \gamma^2 - \frac{1}{4}}\right) \| \bv\|\; .$$
\end{description} \ 
\qed \end{proof}

\begin{theorem}
\label{thm:conp}
If $\tgsvp_{\gamma \sqrt{\gamma^2 - \frac{1}{4}}} \in \tconp$, then $\tdusvp_{\gamma} \in \tconp$.
\end{theorem}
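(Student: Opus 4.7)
The plan is a \tconp~reduction from $\tdusvp_\gamma$ to $\tgsvp_{\gamma'}$, where $\gamma' := \gamma\sqrt{\gamma^2 - 1/4}$, driven by the complementary inequalities of Lemma \ref{lem:conp}. Intuitively, the prover convinces the verifier that a candidate primitive $\bv$ really is a shortest vector by exhibiting a lower bound on $\lambda_1(\bL(\bB)_{\bot \bv})$; if $\bv$ is a shortest vector this lower bound holds (by the second bullet of Lemma \ref{lem:conp}), and if $\bv$ is not a shortest vector the opposite upper bound must hold (by the first bullet), making the \tgsvp~claim false and hence unwitnessable.

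Concretely, for a $\tdusvp_\gamma$ instance $(\bB,d)$ the \tconp~witness consists of (i) a primitive vector $\bv \in \bL(\bB)$ with $\ltn{\bv} > d$, and (ii) a \tconp~witness $w$ for the derived claim that $(\bB', d')$ is a NO-instance of $\tgsvp_{\gamma'}$, where $\bB'$ is a basis of the projected lattice $\bL(\bB)_{\bot\bv}$ (computed in polynomial time by extending $\bv$ to a full lattice basis and projecting the remaining basis vectors orthogonally to $\bv$), and $d' := \ltn{\bv}/\gamma$. The verifier checks that $\bv$ is a primitive lattice vector with $\ltn{\bv} > d$, constructs $\bB'$, and invokes the assumed \tgsvp~verifier on $(\bB', d', w)$, accepting iff that verifier accepts.

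For completeness, if $\lambda_1(\bL(\bB)) > d$ the prover takes $\bv$ to be any shortest vector, which is primitive with $\ltn{\bv} = \lambda_1(\bL(\bB)) > d$. The second bullet of Lemma \ref{lem:conp} then gives $\lambda_1(\bL(\bB)_{\bot\bv}) \geq \sqrt{\gamma^2 - 1/4}\,\ltn{\bv} = \gamma' d'$, so $(\bB',d')$ is (modulo the strict-vs-nonstrict boundary addressed below) a NO-instance of $\tgsvp_{\gamma'}$ and the prover can supply the guaranteed witness $w$. For soundness, if $\lambda_1(\bL(\bB)) \leq d$ then any primitive $\bv$ with $\ltn{\bv} > d$ satisfies $\ltn{\bv} \neq \lambda_1(\bL(\bB))$, so the first bullet of Lemma \ref{lem:conp} gives $\lambda_1(\bL(\bB)_{\bot\bv}) \leq \ltn{\bv}/\gamma = d'$; hence $(\bB',d')$ is a YES-instance of $\tgsvp_{\gamma'}$ and by the \tconp~property no $w$ is accepted, so the composite verifier rejects on every prover message. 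The \tcoam~version follows by replacing the inner \tconp~witness with an \tcoam~witness and executing the \tcoam~protocol in place of the deterministic \tconp~check.

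The principal technical point is the strict-vs-nonstrict boundary: Lemma \ref{lem:conp} only yields $\lambda_1(\bL(\bB)_{\bot\bv}) \geq \gamma' d'$ non-strictly, whereas the \tgsvp~NO-promise is stated strictly. I would resolve this by perturbing $d'$ by an inverse-polynomially small additive slack chosen relative to the bit precision of the input, so that completeness becomes strict while soundness---which has room because the first-bullet inequality has slack whenever $\ltn{\bv}$ strictly exceeds $\gamma\lambda_1(\bL(\bB))$---is preserved. This is a standard device for reductions between lattice promise problems.
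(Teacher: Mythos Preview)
Your proof is correct and follows essentially the same approach as the paper: the witness is a primitive vector $\bv$ with $\ltn{\bv} > d$ together with a \tconp~witness that $(\bB',\ltn{\bv}/\gamma)$ is a NO instance of $\tgsvp_{\gamma'}$, with completeness and soundness obtained respectively from the two bullets of Lemma~\ref{lem:conp}. The paper does not address the strict-versus-nonstrict boundary issue you raise; it simply applies the lemma as stated.
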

\begin{proof}
Let $(\bB, d)$ be an instance of $\tdusvp_{\gamma}$. Assume a witness for recognizing $\lambda_1(\bL(\bB)) > d$ to be a vector $\bv$ and a string $w$. The verification predicate $V$ on input $(\bB, d, \bv, w)$ outputs $1$ if and only if $\bv$ is a primitive vector of $\bL = \bL(\bB)$, $\|\bv\| > d$, and the verification predicate $V'$ for proving $\tgsvp_{\gamma'} \in \tconp$, (where $\gamma' = \gamma \sqrt{\gamma^2 - \frac{1}{4}}$) on input $(\bB', \frac{\| \bv\|}{\gamma}, w)$ outputs $1$, where $\bB'$ is a basis for $\bL_{\bot \bv}$.
\begin{description}
\item[CASE 1:] $(\bB, d)$ is a ``NO" instance, i.e. $\lambda_1(\bL) > d$. 

In this case, let $\bv$ be the shortest vector in $\bL$, and $w$ is the witness output in the proof of $\tgsvp_{\gamma'} \in \tconp$ for input $(\bB', \frac{\| \bv\|}{\gamma})$. 

Since $\lambda_1(\bL) > d$, $\bv$ is a primitive vector of $\bL$ with length greater than $d$. Also, from Lemma \ref{lem:conp}, $\lambda_1(\bL_{\bot \bv}) \geq \left(\sqrt{ \gamma^2 - \frac{1}{4}}\right) \| \bv\| = \gamma' \frac{\| \bv\|}{\gamma}$. 

Thus, the verification predicate $V$ outputs $1$.
\item[CASE 2:] $(\bB, d)$ is a ``YES" instance, i.e. $\lambda_1(\bL) \leq d$.

In this case, let us assume that there exists a witness $\bv, w$ such that $V$ outputs $1$. 

Thus, $\bv$ is a primitive vector with $\| \bv \| > d$. This implies that $\|\bv\| \neq \lambda_1(\bL)$, and using Lemma \ref{lem:conp}, $\lambda_1(\bL_{\bot \bv}) \leq \frac{\|\bv \|}{\gamma} $. Therefore, $V'$, and hence $V$, output $0$, which is a contradiction.
\end{description}
\qed \end{proof} 
 
This result, along with the result of \cite{AR05} implies the following:
\begin{corollary}
There exists $c > 0$ such that $\tdusvp_{c n^{1/4}} \in \tnp \cap \tconp$.
\end{corollary}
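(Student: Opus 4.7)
The plan is to combine Theorem~\ref{thm:conp} with the Aharonov--Regev bound $\tgsvp_{\sqrt{n}} \in \tconp$ for the $\tconp$ part, and exhibit a trivial witness for the $\tnp$ part.

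For the $\tconp$ direction, I would set $\gamma = c n^{1/4}$ and compute the associated $\tgsvp$ factor $\gamma' = \gamma\sqrt{\gamma^2 - \tfrac{1}{4}}$. For large $n$ this is essentially $\gamma^2 = c^2 \sqrt{n}$, and with a small enough constant $c$ (any $c < 1$ will work, and in fact some $c \le 1$ suffices once the $-\tfrac14$ term helps us) we get $\gamma' \leq \sqrt{n}$. Since $\tgsvp_{\gamma'}$ is at least as hard as $\tgsvp_{\sqrt{n}}$ when $\gamma' \leq \sqrt{n}$, membership of the latter in $\tconp$ via \cite{AR05} implies membership of the former in $\tconp$. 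Applying Theorem~\ref{thm:conp} then gives $\tdusvp_{c n^{1/4}} \in \tconp$.

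For the $\tnp$ direction, I would observe that $\tdusvp_\gamma$ admits an immediate NP witness for YES instances: simply the claimed shortest vector $\bv$ itself. The verifier checks that $\bv \in \bL(\bB)$ (by solving a linear system to recover its integer coordinates in the basis) and that $\|\bv\| \leq d$. Provided the shortest vector has polynomial bit-length, this is a valid polynomial-time verification predicate. Polynomial bit-length is guaranteed because, after LLL-reducing $\bB$ in polynomial time, Lemma~\ref{lll} bounds the integer coefficients of the shortest vector by $2^{3n/2}$, so they have polynomial bit-length in the input size.

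The only subtlety I anticipate is carefully verifying the constant: we need $c$ such that $c n^{1/4} \cdot \sqrt{c^2 \sqrt{n} - \tfrac14} \leq \sqrt{n}$ for all sufficiently large $n$, which is straightforward for any $c < 1$ (indeed, for $c=1$ the inequality $n^{1/4}\sqrt{\sqrt{n} - \tfrac14} \leq \sqrt{n}$ holds). Beyond this, both steps are routine given the machinery already assembled in the paper.
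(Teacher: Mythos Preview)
Your overall plan matches the paper's: combine Theorem~\ref{thm:conp} with the Aharonov--Regev result, and handle $\tnp$ by exhibiting a short vector as witness. However, you have the monotonicity direction of $\tgsvp_\gamma$ reversed in the $\tconp$ step, and this is a genuine gap.

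For promise problems of this type, a \emph{larger} gap parameter makes the problem \emph{easier}: every valid $\tgsvp_\beta$ instance is also a valid $\tgsvp_\alpha$ instance whenever $\alpha \le \beta$, so $\tgsvp_\alpha \in \tconp$ implies $\tgsvp_\beta \in \tconp$ for all $\beta \ge \alpha$, not the other way around. You correctly observe that $\tgsvp_{\gamma'}$ is at least as hard as $\tgsvp_{\sqrt{n}}$ when $\gamma' \le \sqrt{n}$, but then draw exactly the wrong conclusion: membership of the \emph{easier} problem in $\tconp$ says nothing about the \emph{harder} one. What you actually need is $\gamma' = \gamma\sqrt{\gamma^2 - \tfrac14} \ge \sqrt{n}$ (or more precisely, at least the Aharonov--Regev factor), so that the known $\tconp$ membership transfers to $\tgsvp_{\gamma'}$. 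Since $\gamma' \approx c^2\sqrt{n}$, this requires choosing $c$ sufficiently \emph{large} (e.g.\ any $c > 1$ works for large $n$, or $c$ large enough to absorb the constant in \cite{AR05}), not $c < 1$ as you propose. The corollary only asserts existence of some $c > 0$, so taking a large constant is perfectly acceptable; your attempt to push $c$ below $1$ is both unnecessary and in the wrong direction.

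Your $\tnp$ argument is fine and is more detail than the paper gives.
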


Note that essentially the same idea as in Theorem \ref{thm:conp} can be used to show that 
\begin{theorem}
If $\tgsvp_{\gamma \sqrt{\gamma^2 - \frac{1}{4}}} \in \tcoam$, then $\tdusvp_{\gamma} \in \tcoam$.
\end{theorem}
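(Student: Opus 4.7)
The plan is to mirror the proof of Theorem~\ref{thm:conp} line by line, replacing the deterministic witness verifier $V'$ for $\tgsvp_{\gamma'} \in \tconp$ by the randomised verifier $V'$ guaranteed by $\tgsvp_{\gamma'} \in \tcoam$, where $\gamma' := \gamma\sqrt{\gamma^2 - \frac{1}{4}}$. Given an instance $(\bB, d)$ of $\tdusvp_\gamma$, the witness string for the new co-AM protocol will be a pair $(\bv, w)$, where $\bv$ is a proposed primitive lattice vector of $\bL = \bL(\bB)$ and $w$ is the co-AM witness that $V'$ expects on the projected instance. On random tape $y$ and witness $(\bv, w)$, the verifier $V$ does the following: it checks that $\bv$ is a primitive vector of $\bL$ and that $\|\bv\| > d$, computes a basis $\bB'$ of the projected lattice $\bL_{\bot \bv}$ (which is possible in polynomial time using Gram-Schmidt in the direction of $\bv$), and then runs $V'$ on input $(\bB', \|\bv\|/\gamma)$ with random tape $y$ and witness $w$, accepting iff $V'$ accepts.

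For the soundness/completeness analysis I would split as in Theorem~\ref{thm:conp}. In the NO case $\lambda_1(\bL) > d$, the honest prover sets $\bv$ equal to the shortest vector of $\bL$; then $\bv$ is primitive, $\|\bv\| = \lambda_1(\bL) > d$, and by the second item of Lemma~\ref{lem:conp} we get $\lambda_1(\bL_{\bot \bv}) \geq \sqrt{\gamma^2 - \frac{1}{4}}\,\|\bv\| = \gamma' \cdot \frac{\|\bv\|}{\gamma}$. Hence $(\bB', \|\bv\|/\gamma)$ is a NO instance of $\tgsvp_{\gamma'}$, so by the co-AM assumption there exists $w$ (depending on $\bv$ but not on $y$) for which $\Pr_y[V'(\bB', \|\bv\|/\gamma, y, w) = 1] \geq \tfrac{2}{3}$, giving $\Pr_y[V = 1] \geq \tfrac{2}{3}$. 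In the YES case $\lambda_1(\bL) \leq d$, suppose for contradiction some pair $(\bv, w)$ makes $V$ accept with probability exceeding $\tfrac{1}{3}$. Then $V$'s syntactic checks must pass, so $\bv$ is primitive and $\|\bv\| > d \geq \lambda_1(\bL)$, hence $\|\bv\| \neq \lambda_1(\bL)$, and the first item of Lemma~\ref{lem:conp} gives $\lambda_1(\bL_{\bot \bv}) \leq \|\bv\|/\gamma$. Thus $(\bB', \|\bv\|/\gamma)$ is a YES instance of $\tgsvp_{\gamma'}$, so $V'$ (and hence $V$) accepts with probability at most $\tfrac{1}{3}$, a contradiction.

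The main thing to verify carefully, which is the only place this proof differs essentially from Theorem~\ref{thm:conp}, is that the witness length remains polynomial: $\bv$ must be encodable in polynomial bits. Because in the NO case the honest prover uses the shortest vector of $\bL$, and we may assume via standard preprocessing (LLL, scaling) that lattice entries and $\lambda_1(\bL)$ are polynomially bounded in the input size, $\bv$ is polynomially representable and $\bB'$ can be produced by the verifier in polynomial time. The witness $w$ for the co-AM protocol of $\tgsvp_{\gamma'}$ is polynomial by the assumption $\tgsvp_{\gamma'} \in \tcoam$, and the random tape $y$ of $V$ is exactly the random tape of $V'$, so no blow-up occurs. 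I expect no further obstacle, since the argument is a direct structural lift of Theorem~\ref{thm:conp} to the randomised setting.
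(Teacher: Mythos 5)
Your proof is correct and does exactly what the paper intends: the paper gives no explicit proof of this theorem, stating only that it follows by ``essentially the same idea as in Theorem~\ref{thm:conp},'' and your argument is a faithful line-by-line lift of that co-NP proof to the randomised verifier, invoking Lemma~\ref{lem:conp} in the same two places. Your remarks on witness size and the polynomial-time computability of a basis for $\bL_{\bot\bv}$ are also the right things to check.

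One point worth being more careful about, since it is the only place where the randomised setting genuinely differs from the \tconp~case. Under the paper's stated definition of \tcoam, the witness $w$ is quantified \emph{before} the random string $y$, so a cheating prover must commit to $(\bv,w)$ ahead of the coins, and your soundness argument (fix $\bv$, observe $(\bB',\|\bv\|/\gamma)$ is a YES instance of $\tgsvp_{\gamma'}$, apply soundness of $V'$) goes through directly. However, in the public-coin Arthur--Merlin protocols of \cite{GG98} and \cite{Cai98} that actually witness $\tgsvp\in\tcoam$, Arthur sends $y$ first and Merlin responds afterwards; under that more standard reading, a cheating prover could choose $\bv$ as a function of $y$, and you can no longer pin down a single instance $(\bB',\|\bv\|/\gamma)$ on which to invoke soundness of $V'$. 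The standard repair is to make the protocol three rounds --- Merlin commits to $\bv$, Arthur sends $y$, Merlin replies with $w$ --- and then collapse back to two rounds via error amplification and Babai's collapse (or, equivalently, amplify $V'$ and union-bound over the polynomially-bounded number of candidate primitive vectors $\bv$ the verifier would accept). You implicitly sidestep this by saying $w$ ``depends on $\bv$ but not on $y$,'' which is consistent with the paper's definition as written, but it would be worth a sentence acknowledging the round-ordering issue. Similarly, the length of $V'$'s random tape depends on $|(\bB',\|\bv\|/\gamma)|$, which varies with the witness $\bv$; one should fix $V$'s tape length as a polynomial upper bound over all admissible $\bv$ and have $V$ pass an appropriate prefix to $V'$, rather than asserting the tapes coincide exactly.
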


Thus, using the result of \cite{GG98}, this implies the following:
\begin{corollary}
There exists $c > 0$ such that $\tdusvp_{c (\frac{n}{\log n})^{1/4}} \in \tnp \cap \tcoam$.
\end{corollary}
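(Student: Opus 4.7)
The plan is to combine the preceding theorem (if $\tgsvp_{\gamma\sqrt{\gamma^2 - 1/4}} \in \tcoam$ then $\tdusvp_\gamma \in \tcoam$) with the Goldreich--Goldwasser result $\tgsvp_{\sqrt{n/\log n}} \in \tcoam$ from \cite{GG98}, and then handle the $\tnp$ part separately using the natural witness for $\tdusvp$.

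First I would set $\gamma = c(n/\log n)^{1/4}$ and compute $\gamma' := \gamma \sqrt{\gamma^2 - 1/4}$. For sufficiently large $n$ and an appropriate choice of $c$, the quantity $\gamma^2 - 1/4$ behaves like $c^2 (n/\log n)^{1/2}$, so $\gamma' = \gamma\sqrt{\gamma^2 - 1/4} \leq c^2 (n/\log n)^{1/2}$ up to a lower-order factor. By choosing $c$ small enough (for instance $c \leq 1$), we obtain $\gamma' \leq \sqrt{n/\log n}$. Then, by the theorem $\tgsvp_{\sqrt{n/\log n}} \in \tcoam$ of \cite{GG98} and the theorem just proved (If $\tgsvp_{\gamma'} \in \tcoam$ then $\tdusvp_\gamma \in \tcoam$), it follows that $\tdusvp_{c(n/\log n)^{1/4}} \in \tcoam$.

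For the $\tnp$ containment, I would observe that $\tdusvp_\gamma$ is trivially in $\tnp$: given a ``YES'' instance $(\bB, d)$, a witness is simply a shortest non-zero lattice vector $\bv$. The verifier checks that $\bv \in \bL(\bB)$ (by solving a linear system to recover its integer coordinates in $\bB$), that $\bv \neq \mathbf{0}$, and that $\|\bv\| \leq d$. The only subtlety is ensuring that the shortest vector has polynomial bit-length, which follows from Lemma \ref{lll} after LLL-reducing $\bB$: the coefficients $\alpha_i$ are bounded by $2^{3n/2}$, so the representation of $\bv$ has polynomial size.

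No step here is a substantial obstacle, since the corollary is essentially a numerical instantiation of the preceding theorem combined with the known $\tcoam$-containment of $\tgsvp_{\sqrt{n/\log n}}$. The only small calculation that needs care is verifying that $\gamma \sqrt{\gamma^2 - 1/4} \leq \sqrt{n/\log n}$ for the specific choice $\gamma = c(n/\log n)^{1/4}$; this reduces to showing $c^2 (n/\log n)^{1/2} \cdot (1 - \tfrac{1}{4\gamma^2})^{1/2} \leq (n/\log n)^{1/2}$, which holds for all sufficiently large $n$ whenever $c \leq 1$.
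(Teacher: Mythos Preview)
Your overall strategy matches the paper's: invoke the preceding theorem together with the Goldreich--Goldwasser result $\tgsvp_{\sqrt{n/\log n}}\in\tcoam$, and handle the $\tnp$ part by the obvious short-vector witness. The $\tnp$ argument is fine.

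However, the inequality you aim for in the $\tcoam$ part is in the wrong direction. For promise problems of this type, a larger gap is \emph{easier}: if $\gamma_1\le\gamma_2$ then every $\tgsvp_{\gamma_2}$ instance is already a $\tgsvp_{\gamma_1}$ instance with the same answer, so $\tgsvp_{\gamma_1}\in\tcoam$ implies $\tgsvp_{\gamma_2}\in\tcoam$, not the reverse. From $\tgsvp_{\sqrt{n/\log n}}\in\tcoam$ you therefore get $\tgsvp_{\gamma'}\in\tcoam$ only when $\gamma'\ge\sqrt{n/\log n}$. So you must choose $c$ \emph{large} enough that
\[
\gamma\sqrt{\gamma^2-\tfrac14}\;\ge\;\sqrt{n/\log n}\qquad\text{with }\gamma=c\,(n/\log n)^{1/4},
\]
which holds (for all sufficiently large $n$) once $c$ is a bit larger than $1$; taking, say, $c=2$ works. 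Your computation ``$\gamma'\le\sqrt{n/\log n}$ for $c\le 1$'' is correct as an inequality but useless for the implication you need. With the sign flipped the argument goes through exactly as the paper intends.
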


\section{A deterministic reduction from $\tusvp_{\gamma}$ to $\tdusvp_{\gamma/2}$}

The following lemma is taken from the $\tusvp$ to $\tgsvp$ reduction given in \cite{LM09}.
\begin{lemma}
\label{lem:dim_red}
Let $\bL = \bL_0$ be a lattice of rank $n \geq 2$ given by its basis vectors, and let $\bu$ be the shortest non-zero vector of $\bL$. If there exists an efficient algorithm that computes a basis for $\bL_{i+1}$, a sub-lattice of $\bL_i$ such that $\bL_{i+1} \neq \bL_i$ and  $\bu \in \bL_{i+1}$ for all $i \geq 0$, then there exists an efficient algorithm that computes a basis for a sublattice $\tilde{\bL}$ of $\bL$ of rank $n-1$ such that $\bu \in \tbL$.
\end{lemma}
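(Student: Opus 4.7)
The plan is to iterate the hypothesized oracle starting from $\bL_0 = \bL$, force a rank drop in polynomially many steps, and then (if the drop overshoots below $n-1$) patch things up by adjoining basis vectors of $\bL$ that lie outside the current span.

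First I would set $\bL_0 := \bL$ and repeatedly invoke the oracle to produce a chain $\bL_0 \supsetneq \bL_1 \supsetneq \bL_2 \supsetneq \ldots$, each containing $\bu$. Between calls I would LLL-reduce the current $\bL_i$; this is essential because nothing in the hypothesis prevents the oracle from returning a basis of exploding bit-length, whereas LLL (Definition~\ref{def:LLL}) returns an equivalent basis with entries polynomially bounded in the input size. As long as $\bL_i$ still has full rank $n$, proper inclusion forces $\det(\bL_{i+1}) \ge 2\,\det(\bL_i)$, hence $\log\det(\bL_i) \ge i + \log\det(\bL)$ after $i$ steps. On the other hand, Hadamard's bound applied to the LLL-reduced basis gives $\log\det(\bL_i) \le \poly(n, s_0)$, where $s_0$ is the bit-size of the original input. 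Combining these two estimates shows that the rank must drop within a polynomial number of iterations.

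Let $k$ be the first index with $r := \operatorname{rank}(\bL_k) < n$. If $r = n-1$, I output $\tilde{\bL} := \bL_k$ directly. If $r < n-1$, I extend to rank $n-1$ as follows: letting $\bb_1, \ldots, \bb_n$ be the input basis of $\bL$, I greedily pick any $\bb_j$ lying outside the current $\operatorname{span}(\bL_k)$ and replace $\bL_k$ by $\bL_k + \Z\bb_j$. Each such step strictly increases the rank by exactly one, preserves the inclusion $\bL_k \subseteq \bL$, and keeps $\bu \in \bL_k$; after at most $n-1-r$ extensions the rank is exactly $n-1$, and the resulting $\tilde{\bL}$ is the desired sublattice.

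The main obstacle I anticipate is establishing the polynomial iteration bound in the first step. Without an LLL reduction between oracle calls, the bases could grow so fast that Hadamard's bound on $\log\det$ would fail to beat the ``doubling'' lower bound $i + \log\det(\bL)$, and one would lose any handle on when the rank must drop. The insertion of LLL is exactly what converts ``bases stay polynomial-sized'' into ``determinants stay polynomially-bounded in log,'' and that leverage is what terminates the outer loop in polynomial time.
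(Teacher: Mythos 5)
Your argument hinges on the rank of $\bL_i$ dropping below $n$ after polynomially many oracle calls, but nothing in the hypothesis forces this, and in the paper's own application (Lemma~\ref{lem:dusvporacle}) it simply never happens: there, each of the candidate bases $\bB_1, \bB_2, \bB_3$ has $n$ columns, so the oracle always returns a \emph{full-rank} sublattice of strictly larger determinant. Your outer loop would run forever. The step where you deduce a rank drop from a bound $\log\det(\bL_i)\le\poly(n,s_0)$ is where things go wrong: the determinant is a lattice invariant, so LLL-reducing the basis of $\bL_i$ cannot shrink it, and since $\det(\bL_{i+1})\ge 2\det(\bL_i)$ while the rank stays $n$, the quantity $\log\det(\bL_i)$ genuinely grows linearly in $i$. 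Hadamard's inequality applied to the LLL-reduced basis only says the basis vectors must be at least as long as the growing determinant demands; it does not supply a ceiling depending only on $s_0$. Your two estimates do not collide; one of them is false.

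The missing idea is to exploit the determinant growth via the \emph{dual} rather than to wait for a rank drop. The paper's proof runs the oracle a fixed $t > n(n + \log_2 n)$ times, so $\det(\bL_t) \ge 2^t \det(\bL)$, then passes to the dual lattice $\bL_t^{*}$, whose determinant is correspondingly tiny. Minkowski's bound together with LLL then produce a short vector $\bv \in \bL_t^{*}$. Because $\bu \in \bL_t$ and $\bv \in \bL_t^{*}$, the pairing $\langle\bu,\bv\rangle$ is an integer, and the length bounds give $|\langle\bu,\bv\rangle| < 1$, forcing $\langle\bu,\bv\rangle = 0$. Restricting $\bL$ to the hyperplane orthogonal to $\bv$ then yields the desired rank-$(n-1)$ lattice containing $\bu$. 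In short, unbounded index growth is converted into orthogonality information against $\bu$, sidestepping the rank-drop assumption entirely.
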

\begin{proof}
Let $\bB$ be the given basis for $\bL$, let $\bS$ be a basis for the sublattice $\bL_{t}$ for some $t > n(n + \log_2 n)$, and let $\bD$ be the dual basis of $\bS$. Since $\bL_{i+1}$ is a sub-lattice of $\bL_i$ for all $i$, we have that $\text{det}(\bS) \geq 2^t \text{det}(\bB)$, which implies $\text{det}(\bD) \leq 1/\left(2^t \text{det}(\bB)\right)$. By Minkowski's bound \cite{Min53}, we have $\lambda_1(\bL(\bD))\leq \sqrt{n} \text{det}(\bD)^{1/n}$, which implies that using the LLL algorithm \cite{LLL82}, we can find a vector $\bv \in \bL(\bD)$	such that
$$ \|\bv \| \leq 2^n \lambda_1(\bL(\bB)) \leq \frac{2^n \sqrt{n}}{2^{t/n} \text{det}(\bB)^{1/n}} \; .$$  
Also, using Minkowski's bound, we have $\|\bu \| \leq \sqrt{n} \text{det}(\bB)^{1/n}$. This implies that $$|\langle \bu, \bv \rangle | \leq \|\bu \| \| \bv \| \leq n \cdot 2^{n - t/n} < 1\; . $$

But $\bu \in \bL(\bD)$ and $\bv \in \bL(\bS)$, and thus $|\langle \bu, \bv \rangle |$ is an integer, which implies $\langle \bu, \bv \rangle = 0$, i.e., $\bu$ is perpendicular to $\bv$.  Thus, by taking the projection of $\bL$ perpendicular to $\bv$, we get a lattice $\tilde{\bL}$ in rank $n-1$ such that $\bu \in \tbL$.
\qed \end{proof}

\begin{lemma}
\label{lem:dusvporacle}
Let $\gamma \geq 2$ and $\bL$ be a lattice such that $\lambda_2(\bL) \geq \gamma \lambda_1(\bL)$. Then, given any sublattice $\bL'$ of $\bL$ containing the shortest non-zero vector $\bu$ of $\bL$ and an oracle that solves $\tdusvp_{\gamma/2}$, there exists an algorithm that computes a sublattice $\bL'' (\neq \bL')$ of $\bL'$ such that $\bu \in \bL''$.
\end{lemma}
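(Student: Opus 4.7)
The plan is to construct an explicit polynomial-size family of index-$2$ sublattices of $\bL'$, one of which must contain $\bu$, and to use the $\tdusvp_{\gamma/2}$ oracle to identify such a sublattice. As a preliminary step, note that $\bL'$ itself is a valid $\tdusvp_{\gamma/2}$ instance: since $\bu$ is the shortest nonzero vector of $\bL \supseteq \bL'$, we have $\lambda_1(\bL') = \|\bu\|$ and $\lambda_2(\bL') \geq \lambda_2(\bL) \geq \gamma\|\bu\| = \gamma\lambda_1(\bL')$. So by binary search on the threshold $d$, calling the oracle on $(\bL', d)$, the algorithm first computes $\|\bu\| = \lambda_1(\bL')$ exactly.

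Next, fix a basis $(\bb_1, \ldots, \bb_n)$ of $\bL'$ and write $\bu = \sum_j \alpha_j \bb_j$. Since $\bu$ is the shortest nonzero vector of $\bL$, it is primitive in $\bL$ and hence in $\bL'$, giving $\gcd(\alpha_1, \ldots, \alpha_n) = 1$ and in particular some $\alpha_j$ is odd. For each nonempty $W \subseteq [n]$, let $\bL_W$ be the index-$2$ sublattice of $\bL'$ consisting of vectors $\sum c_j \bb_j$ with $\sum_{j \in W} c_j$ even; an explicit basis for $\bL_W$ is obtained by the Kumar--Sivakumar transformation used in Section~4 (pick any $i \in W$, double $\bb_i$, and replace $\bb_j$ by $\bb_j - \bb_i$ for $j \in W \setminus \{i\}$). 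Observe that $\bu \in \bL_W$ iff $\sum_{j \in W}\alpha_j$ is even, so if some $\alpha_i$ is even then $W = \{i\}$ works, and otherwise all $\alpha_j$ are odd and any pair $W = \{i,j\}$ works. It therefore suffices to enumerate only the $O(n^2)$ singletons and pairs.

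The algorithm iterates over these $O(n^2)$ subsets $W$, queries the oracle on $(\bL_W, \|\bu\|)$, and returns the first $\bL_W$ for which the oracle answers YES. I must justify that (i) every such $\bL_W$ satisfies the $\gamma/2$-uSVP promise, so the oracle is entitled to answer, and (ii) the answer is YES iff $\bu \in \bL_W$. If $\bu \in \bL_W$ then $\lambda_1(\bL_W) = \|\bu\|$ and $\lambda_2(\bL_W) \geq \lambda_2(\bL) \geq \gamma\lambda_1(\bL_W)$. If $\bu \notin \bL_W$ then $2\bu \in \bL_W$ but no odd multiple of $\bu$ lies in $\bL_W$, and any vector of $\bL_W \subseteq \bL'$ not parallel to $\bu$ has norm at least $\lambda_2(\bL') \geq \gamma\|\bu\|$; using $\gamma \geq 2$ this yields $\lambda_1(\bL_W) = 2\|\bu\| > \|\bu\|$ and $\lambda_2(\bL_W) \geq \gamma\|\bu\| = (\gamma/2)\lambda_1(\bL_W)$, establishing both (i) and (ii).

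The delicate point is exactly the $\bu \notin \bL_W$ case: the hypothesis $\gamma \geq 2$ is used there precisely to ensure that after doubling, the $\lambda_2/\lambda_1$ gap in $\bL_W$ remains at least $\gamma/2$, which is what allows the oracle to be legitimately applied to sublattices that do not contain $\bu$.
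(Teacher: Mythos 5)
Your proof is correct and follows essentially the same approach as the paper: binary search with the oracle to locate $\|\bu\|$, then pass to index-$2$ sublattices whose membership the oracle can decide because the $\gamma/2$ promise is preserved (using $\gamma\geq 2$ so that $2\|\bu\|\leq\gamma\|\bu\|$). The only difference is cosmetic: the paper observes that three sublattices suffice, namely those corresponding to $W=\{1\}$, $W=\{2\}$, $W=\{1,2\}$ (covering the cases $\alpha_1$ even, $\alpha_2$ even, and both odd), whereas you enumerate all $O(n^2)$ singletons and pairs; and your claim that binary search yields $\|\bu\|$ ``exactly'' is slightly loose since the oracle takes integer thresholds, but the resulting $\lceil\|\bu\|\rceil$ works just as well since $\lceil\|\bu\|\rceil < 2\|\bu\|$ for an integer lattice.
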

\begin{proof}
Using the $\tdusvp_{\gamma/2}$ oracle, we can estimate $\|\bu\|$ within a factor of $2$ using binary search. Thus, let $d$ be such that $d/2 < \| \bu \| \leq d$.

Let $\bB = (\bb_1, \bb_2, \ldots, \bb_n)$ be a basis for $\bL'$ and let $\bu = \alpha_1 \bb_1 + \cdots + \alpha_n \bb_n$ be the shortest vector of $\bL$ for some $\alpha_i \in \Z$. Note that since $\bL'$ is a sub-lattice of $\bL$, $\lambda_2(\bL') \geq \lambda_2(\bL)$. 

Consider three basis as follows: $$\bB_1 = (2 \bb_1, \bb_2, \bb_3, \ldots, \bb_n) \; ,$$ $$\bB_2 = (\bb_1, 2 \bb_2, \bb_3, \ldots, \bb_n) \;, $$ $$\bB_3 = (\bb_1 + \bb_2, 2 \bb_2, \bb_3, \ldots, \bb_n) \; .$$

It is easy to see that $2 \bu$ belongs to each of $\bL(\bB_1)$, $\bL(\bB_2)$, and $\bL(\bB_3)$. Also, since these are sub-lattices of $\bL(\bB)$, $\lambda_2(\bL(\bB_i)) \geq \lambda_2(\bL(\bB))$. This implies that $\lambda_2(\bL(\bB_i)) \geq \frac{\gamma}{2} \lambda_1(\bL(\bB_i))$ for $i \in \{1, 2, 3\}$. 
Thus, using the $\tdusvp_{\gamma/2}$ oracle, we can check whether $\lambda_1(\bL(\bB_i)) \leq d$, or $\lambda_1((\bL(\bB_i)) > d$, and hence whether $\bu \in \bL(\bB_i)$ or not. 

It is sufficient to prove that $\bu \in \bL(\bB_i)$ for some $i \in \{1, 2, 3\}$. If $\alpha_1$ is even, then $\bu \in \bL(\bB_1)$, and if $\alpha_2$ is even, then $\bu \in \bL(\bB_2)$. If $\alpha_1$ and $\alpha_2$ are both odd, then $\bu = \alpha_1 (\bb_1 + \bb_2) + \frac{\alpha_2 - \alpha_1}{2}(2 \bb_2) + \alpha_3 \bb_3 + \cdots + \alpha_n \bb_n \in \bL(\bB_3)$. 
\qed \end{proof}

Thus, given a $\tusvp_{\gamma}$ instance $\bL(\bB)$ of rank $n$, using Lemma \ref{lem:dusvporacle}, we can obtain a sequence of sub-lattices (where each lattice is a strict sub-lattice of the previous one) such that each of these contains the shortest vector of $\bL(\bB)$. Then, using Lemma \ref{lem:dim_red}, we obtain a basis of a sublattice of $\bL(\bB)$ of rank $n-1$, still containing the shortest vector of $\bL(\bB)$. Repeating this procedure, we obtain a basis of a sublattice of $\bL(\bB)$ of rank $1$ containing the shortest vector of $\bL(\bB)$, which will be the vector $\bu$. We thus obtain the following result. 

\begin{theorem}
For any $\gamma \geq 2$, there exists an algorithm that solves $\tusvp_{\gamma}$ given a $\tdusvp_{\gamma/2}$ oracle. 
\end{theorem}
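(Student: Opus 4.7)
The plan is to combine Lemma \ref{lem:dusvporacle} and Lemma \ref{lem:dim_red} in an outer loop over the rank. Given a $\tusvp_\gamma$ instance $\bL = \bL(\bB)$ of rank $n$ with shortest vector $\bu$, I want to recursively construct a sublattice of $\bL$ of rank $n-1$ that still contains $\bu$, and then recurse on that sublattice until the rank drops to $1$, at which point $\bu$ can be read off directly (after dividing out the content of the single basis vector).

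To drop the rank by one, I will first invoke Lemma \ref{lem:dusvporacle} repeatedly. Starting from $\bL_0 := \bL$, each application of the lemma, using the $\tdusvp_{\gamma/2}$ oracle, produces a strict sublattice $\bL_{i+1} \subsetneq \bL_i$ containing $\bu$. I repeat this for more than $n(n + \log_2 n)$ steps so that the hypothesis of Lemma \ref{lem:dim_red} is satisfied. Lemma \ref{lem:dim_red} then outputs (deterministically, via a dual-lattice/LLL computation) a basis of a sublattice $\tbL \subseteq \bL$ of rank $n-1$ with $\bu \in \tbL$.

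To recurse, I need to check that $\tbL$ is still a valid $\tusvp_\gamma$ instance. Because $\tbL \subseteq \bL$, every nonzero vector of $\tbL$ has length at least $\lambda_1(\bL) = \|\bu\|$; since $\bu \in \tbL$, we get $\lambda_1(\tbL) = \|\bu\|$ and $\bu$ remains the shortest vector. Likewise $\lambda_2(\tbL) \geq \lambda_2(\bL) \geq \gamma \lambda_1(\bL) = \gamma \lambda_1(\tbL)$, so the unique-shortest-vector gap is preserved. Thus the whole construction may be applied to $\tbL$, and iterating $n-1$ times drives the rank down to $1$.

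The main obstacle, which is really the only thing requiring care, is confirming that the gap $\lambda_2/\lambda_1 \geq \gamma$ survives both operations used: the rank-preserving sublattice steps of Lemma \ref{lem:dusvporacle} and the rank-reducing projection of Lemma \ref{lem:dim_red}. In both cases the shortest vector $\bu$ is retained by construction, so $\lambda_1$ does not change, and any sublattice/projection onto a space containing $\bu$ can only make $\lambda_2$ larger, so the $\gamma$-gap is maintained; this is exactly what lets the oracle to $\tdusvp_{\gamma/2}$ continue to be called on the intermediate lattices. A count of oracle calls (polynomial per rank reduction, times $n$ reductions) together with the polynomial-time guarantees in the two underlying lemmas gives an overall polynomial-time algorithm, completing the reduction.
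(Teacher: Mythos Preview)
Your proposal is correct and follows essentially the same approach as the paper: repeatedly apply Lemma~\ref{lem:dusvporacle} to get a strictly descending chain of sublattices containing $\bu$, feed this into Lemma~\ref{lem:dim_red} to drop the rank by one, and iterate down to rank~$1$. You in fact supply more detail than the paper does, explicitly verifying that the $\gamma$-gap (and hence the applicability of the $\tdusvp_{\gamma/2}$ oracle) is preserved under passage to sublattices containing $\bu$; the paper leaves this implicit.
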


\section{Discussion and open problems}

Many interesting problems related to $\tusvp$ remain. The gap between the uniqueness factor $(1+\frac{1}{\tpoly})$, for which we know that the $\tusvp$ is hard, and $(\frac{n}{\log n})^{1/4}$, for which we know that the problem is in \tcoam~ is still large. It will be interesting to try to show hardness of $\tusvp$ for some constant factor.

The decision version of $\tusvp$ was not known to be $\tnp$-hard, as it does not follow from Kumar-Sivakumar's work \cite{KS01}. Our deterministic reduction from $\tsvp$ succeeds in showing the $\tnp$-hardness of the decision version but this hardness cannot be concluded even for a factor of $(1+\frac{1}{\tpoly})$ hardness, which remains an open problem. The search to decision equivalence of $\tdusvp$ and $\tusvp$ upto a factor of~2, shows that the complexity of the two problems is not too far apart. It is interesting to try to improve the factor of $2$, but this might require substantially new ideas. It is a major open question whether such a search to decision reduction is possible in the case of approximation versions of the shortest vector problem and the closest vector problem.


\begin{thebibliography}{99}
\bibitem{AD97} M. Ajtai and C. Dwork. A public-key cryptosystem with worst-case/average-case equivalence, {\em STOC}, 1997.
\bibitem{Ajt96} M. Ajtai. Generating hard instances of lattice problems, {\em STOC}, 1996, 99-108.
\bibitem{Ajt98} M. Ajtai. The shortest vector problem in $\ell_2$ is {\bf NP}-hard for randomized reductions, {\em STOC}, 1998, 10-19.
\bibitem{AKS02} Mikl{\'o}s Ajtai, Ravi Kumar and D. Sivakumar. Sampling Short Lattice Vectors and the Closest Lattice Vector Problem, {\em CCC}, 2002, pp. 53-57.
\bibitem{AKS01} M. Ajtai, R. Kumar and D. Sivakumar. A sieve algorithm for the shortest lattice vector problem, {\em STOC}, 1998, 266-275.
\bibitem{AR05} D. Aharonov and O. Regev. Lattice problems in \tnp~ $\cap$ \tconp, {\em J. ACM} 52 (2005), no. 5, 749-765. 
\bibitem{B81} P. van Emde Boas. Another {\bf NP}-complete partition problem and the complexity of computing short vectors in a lattice. Technical report 81-04, Mathematisch Instituut, Universiteit van Amsterdam, 1981.
\bibitem{BS99} J. Bl\"{o}mer and J.-P. Seifert. The complexity of computing short linearly independent vectors and short bases in a lattice, {\em STOC}, 1999, 711-720.
\bibitem{Ban93} W. Banaszczyk. New bounds in some transference theorems in the geometry of numbers. {\em Mathematische Annalen}, 296:625-635, 1993.
\bibitem{Cai98} Jin-yi Cai. A Relation of Primal-Dual Lattices and the Complexity of Shortest Lattice Vector Problem. {\em Theor. Comput. Sci.} 207(1): 105-116 (1998).
\bibitem{Dinur02} I. Dinur, Approximating $\tsvp_{\infty}$ to within almost polynomial factors is \tnp-hard. {\em Theoretical Computer Science} 285 (2002) no. 1, pp. 55-71. 
\bibitem{DKRS03} I. Dinur, G. Kindler, R. Raz, and S. Safra. Approximating CVP to within almost-polynomial factors is {\bf NP}-hard. {\em Combinatorica}, 23(2):205-243, 2003.
\bibitem{GG98} O. Goldreich, and S. Goldwasser. On the limits of non-approximability of lattice problems, STOC, 1998, pp. 1-9.
\bibitem{GGH97} O. Goldreich, S. Goldwasser, and S. Halevi. Eliminating decryption errors in Ajtai-Dwork cryptosystem, CRYPTO, 1997, pp. 112-123. 
\bibitem{GMSS99} O. Goldreich, D. Micciancio, S. Safra, and J.-P. Seifert. Approximating shortest lattice vectors is not harder than approximating closest lattice vectors. {\em Information Processing Letters}, 71(2):55-61, 1999. 
\bibitem{GM02}  S. Goldwasser and D. Micciancio.  Complexity of lattice problems, {\em Springer}, 2002.
\bibitem{HR07} I. Haviv and O. Regev. Tensor-based hardness of the shortest vector problem to within almost polynomial factors, {\em STOC}, 2007. 
\bibitem{Kannan87} Ravi Kannan. Minkowski's convex body theorem and integer programming, {\em Math. Oper. Res.}, 12 (1987), pp. 415-440. 
\bibitem{Kan87} Ravi Kannan. Algorithmic geometry of numbers, Annual Review of Computer Science 2 (1987), 231-267.
\bibitem{Khot05} S. Khot. Hardness of approximating the shortest vector problem in lattices, {\em JACM}, 2005, 52(5), 789-808.
\bibitem{KS01} R. Kumar and D. Sivakumar. On the unique shortest lattice vector problem, Theoretical Computer Science 255 (2001), no. 1-2, 641-648.
\bibitem{LLL82} A. K. Lenstra, H. W. Lenstra Jr., and L. Lovasz. Factoring polynomials with rational coefficients, {\em Mathematische Annalen}, 261(1982), 513-534. 
\bibitem{LM09} Vadim Lyubashevsky, Daniele Micciancio. On Bounded Distance Decoding, Unique Shortest Vectors, and the Minimum Distance Problem, {\em CRYPTO} 2009, 577-594.
\bibitem{Mic08} D. Micciancio. Efficient reductions among lattice problems, {\em SODA}, 2008, 84-93.
\bibitem{Mic01} D. Micciancio. The shortest vector problem is {\bf NP}-hard to approximate within some constant, {\em SIAM journal on Computing}, 2001, 30(6), 2008-2035.
 \bibitem{Min53} H. Minkowski.   {\em Geometrie der Zahlen}, reprint 1953.
\bibitem{MV10} D. Micciancio and P. Voulgaris. A deterministic single exponential time algorithm for most lattice problems based on voronoi cell computations, {\em STOC}, 2010, pp. 351-358.
\bibitem{Pei09} C. Peikert. Public-key cryptosystems from the worst-case shortest vector problem, {\em STOC}, 2009. 
\bibitem{Reg04} O. Regev. New lattice-based cryptographic constructions, {\em J. ACM} 51 (2004), no. 6, 899-942.
\bibitem{Sch87} C.-P. Schnorr. A hierarchy of polynomial time lattice basis reduction algorithms, {\em Theoretical Computer Science}, 53(2-3):201-224, 1987. 
\end{thebibliography}
\end{document}